\renewenvironment{table}
{\def\@floatboxreset{\reset@font\scriptsize\@setminipage}\@float{table}}
{\end@float}
\renewenvironment{table*}
{\def\@floatboxreset{\reset@font\scriptsize\@setminipage}\@dblfloat{table}}
{\end@dblfloat}
\pgfplotsset{compat=newest}
\pgfplotsset{minor grid style={dashed,very thin, color=blue!15}}
\pgfplotsset{major grid style={very thin, color=black!30}}
\newenvironment{experimentfigurecomp}{\begin{figure*}}{\end{figure*}}
\newenvironment{experimentfigure}{\begin{figure}\scriptsize}{\end{figure}}
\tikzset{nomorepostaction/.code=\let\tikz@postactions\pgfutil@empty}
\newcommand{\kindseparator}{\hskip 0ex{}}
\pgfplotsset{
	automatically generated axis/.style={
		height=105pt,
		width=145pt,
		scaled ticks=false,
		xmin=0,
		xticklabel style={font=\tiny,/pgf/number format/.cd, fixed,/tikz/.cd},
		yticklabel style={font=\tiny,/pgf/number format/.cd, fixed,/tikz/.cd},
		x label style={at={(ticklabel cs:0.5, -5pt)},name={x label},anchor=north,font=\scriptsize},
		y label style={at={(ticklabel cs:0.5, -5pt)},name={y label},anchor=south,font=\scriptsize},
		enlargelimits=false,
		every axis title/.style={font=\tiny},
	},
	CT plot/.style={
		height=105pt,
		width=250pt,
		yticklabel={%
			\pgfkeys{/pgf/fpu}%
			\pgfmathparse{\tick}%
			\pgfmathfloatifflags{\pgfmathresult}{0}{0}{%
				\pgfmathprintnumber[sci,sci zerofill,precision=1]{\tick}%
			}%
			\pgfkeys{/pgf/fpu=false}%
		},
		every axis title/.style={font=\scriptsize},
		ymin=0,ymax=2e5,%
		scaled ticks=false,
		xticklabel style={font=\tiny,/pgf/number format/.cd,/tikz/.cd},
		yticklabel style={font=\tiny,/pgf/number format/.cd, fixed,precision=2,/tikz/.cd},
		x label style={at={(ticklabel cs:0.5, -5pt)},name={x label},anchor=north,font=\scriptsize},
		y label style={at={(ticklabel cs:0.5, -5pt)},name={y label},anchor=south,font=\scriptsize},
	},
	automatically generated symbolic/.style={
		height=105pt,
		width=250pt,
		xticklabel style={font=\tiny,text width=8ex,align=center},
		yticklabel style={font=\tiny,/pgf/number format/.cd, fixed,/tikz/.cd},
		x label style={at={(ticklabel cs:0.5, -5pt)},name={x label},anchor=north,font=\scriptsize},
		y label style={at={(ticklabel cs:0.5, -5pt)},name={y label},anchor=south,font=\scriptsize},
		every axis title/.style={font=\scriptsize},
	},
	first kind/.style={
		legend style={font=\scriptsize,fill=none},
		legend columns=7,legend cell align=left,
	},
	posterior kind/.style={
		legend style={draw=none},
	},
}
\tikzset{
	automatically generated plot/.style={
		/pgfplots/error bars/x dir=none,
		/pgfplots/error bars/y dir=none,
		/tikz/mark options={solid},
	},
	automatically generated bar plot/.style={
		/pgfplots/error bars/y dir=none,
	},
	automatically generated boxplot/.style={
		/pgfplots/boxplot={
			every average/.append style={/tikz/mark size=1.5pt,/tikz/mark=none},
			draw/lower whisker/.append code={
				\path (boxplot cs:\pgfplotsboxplotvalue{lower whisker}) ++(0pt,-5pt) coordinate (below whisker);
				\path (axis description cs: 0,0) ++(0pt,5pt) coordinate (mark origin);
				\path (below whisker |- mark origin) coordinate (mark position);
				\draw[dotted] (below whisker) -- (mark position);
				\node at (mark position) {\pgfuseplotmark{\mymark}};
			}
		},
	},
}
\def\xValiantxtext{Valiant}
\colorlet{shortest color}{olive}
\colorlet{Valiant color}{blue}
\colorlet{OMNI color}{red}
\colorlet{Polarized color}{violet}
\colorlet{Polarized UD color}{black}
\colorlet{DAL color}{teal}
\colorlet{OMNI SP color}{purple}
\tikzset{
	Polarized UD line/.style={solid,mark=square,color=Polarized UD color,thick},
	OMNI SP line/.style={solid,mark=triangle,color=OMNI SP color,thick},
	Polarized line/.style={dashed,mark=Mercedes star flipped,color=Polarized color},
	shortest line/.style={dashed,mark=o,color=shortest color},
	Valiant line/.style={dotted,mark=star,blue},
	OmniDim line/.style={dotted,mark=diamond,color=OMNI color},
	DAL line/.style={dotted,mark=pentagon,DAL color},
}
\tikzset{xshortestx/.style={automatically generated plot,shortest line}}
\tikzset{xValiantx/.style={automatically generated plot,Valiant line}}
\tikzset{xOmnixUPx/.style={automatically generated plot,OMNI SP line}}
\tikzset{xDALx/.style={automatically generated plot,DAL line}}
\tikzset{xPolarizedxwithxladderx/.style={automatically generated plot,Polarized line}}
\tikzset{xOmnixdimensionalxwithxRIIIxderoutesx/.style={automatically generated plot,OmniDim line}}
\tikzset{xPolarizedxUPx/.style={automatically generated plot,Polarized UD line}}
\tikzset{xPolarizedxwithxRRIRIIxescapexagRIIIxRVIRIVxxRIIIxRIVRVIIIxx/.style={automatically generated plot,black,solid,mark=square}}
\tikzset{xPolarizedxwithxRRIRIIxescapexagRIIxRVIRIVxxRIIIxRIVRVIIIxx/.style={automatically generated plot,green,dashed,mark=triangle}}
\tikzset{xPolarizedxwithxRRIRIIxescapexagRIxRVIRIVxxRIIIxRIVRVIIIxx/.style={automatically generated plot,blue,dotted,mark=o}}
\tikzset{xPolarizedxwithxRRIRIIxescapexagRIxRVIRIVxxRIxRIVRVIIIxx/.style={automatically generated plot,violet,solid,mark=diamond}}
\tikzset{xPolarizedxwithxRRIRIIxescapexagRIIxRVIRIVxxRIxRIVRVIIIxx/.style={automatically generated plot,red,dash dot,mark=star}}
\tikzset{xPolarizedxwithxRRIRIIxescapexagRIIIxRVIRIVxxRIxRIVRVIIIxx/.style={automatically generated plot,blue,dotted,mark=o}}
\tikzset{xPolarizedxwithxRRIRIIxescapexagRIVxRVIRIVxxRIxRIVRVIIIxx/.style={automatically generated plot,green,dashed,mark=triangle}}
\tikzset{
	shortest pattern/.style={pattern=north west lines},
	Valiant pattern/.style={pattern=grid},
	KSP UGAL pattern/.style={pattern=crosshatch},
	OmniDim pattern/.style={pattern=crosshatch},
	UGAL pattern/.style={pattern=crosshatch dots},
	Polarized pattern/.style={pattern=north east lines},
	Polarized SP pattern/.style={pattern=north east lines},
	OMNI SP pattern/.style={pattern=crosshatch},
}
\tikzset{
	Polarized SP bar/.style={automatically generated bar plot,fill=Polarized UD color!20,postaction={Polarized SP pattern}},
	OMNI SP bar/.style={automatically generated bar plot,fill=OMNI SP color!20,postaction={OMNI SP pattern}},
}
\tikzset{xshortestxbar/.style={automatically generated bar plot,fill=shortest color!20,postaction={shortest pattern},}}
\tikzset{xValiantxbar/.style={automatically generated bar plot,fill=Valiant color!20,postaction={Valiant pattern},}}
\tikzset{xOmnixUPxbar/.style={automatically generated bar plot,fill=KSP UGAL color!20,postaction={KSP UGAL pattern},}}
\tikzset{xDALxbar/.style={automatically generated bar plot,fill=UGAL color!20,postaction={UGAL pattern},}}
\tikzset{xPolarizedxwithxladderxbar/.style={automatically generated bar plot,fill=POL-ESCAPE-HCO color!20,postaction={Polarized pattern},}}
\tikzset{xOmnixdimensionalxwithxRIIIxderoutesxbar/.style={automatically generated bar plot,fill=OMNI color!20,postaction={OmniDim pattern},}}
\tikzset{xPolarizedxUPxbar/.style={automatically generated bar plot,fill=POL-ESCAPE-DOR color!20,postaction={Polarized pattern},}}
\tikzset{boxplotlegend/.style={
	/pgfplots/legend image code/.code={
		\draw (0,-4pt) rectangle ++(30pt,8pt);
		\draw[\myregion] (3pt,-3pt) rectangle ++(16pt,6pt);
		\draw[\myline,mark options={draw=black,solid}] plot coordinates {(25pt,0pt)};
	}
}}
\tikzset{xshortestxboxplotlegend/.style={
	/my/region=xshortestxbar,
	/my/line=shortest line,
	boxplotlegend,
}}
\tikzset{xValiantxboxplotlegend/.style={
	/my/region=xValiantxbar,
	/my/line=Valiant line,
	boxplotlegend,
}}
\tikzset{xRVIIIxKSPxrandomizedxUGALboxplotlegend/.style={
	/my/region=xOmnixUPxbar,
	/my/line=KSP UGAL line,
	boxplotlegend,
}}
\tikzset{xDALxboxplotlegend/.style={
	/my/region=xDALxbar,
	/my/line=UGAL line,
	boxplotlegend,
}}
\tikzset{xPolarizedxwithxladderxboxplotlegend/.style={
	/my/region=xPolarizedxwithxladderxbar,
	/my/line=Polarized line,
	boxplotlegend,
}}
\tikzset{xOmnixdimensionalxwithxRIIIxderoutesxboxplotlegend/.style={
	/my/region=xOmnixdimensionalxwithxRIIIxderoutesxbar,
	/my/line=OmniDim line,
	boxplotlegend,
}}
\tikzset{xPolarizedxUPxboxplotlegend/.style={
	/my/region=xPolarizedxUPxbar,
	/my/line=Polarized line,
	boxplotlegend,
}}
\tikzset{xshortestxboxplot/.style={
	automatically generated boxplot,fill=shortest color!20,
	/my/mark=o,
	every path/.style={postaction={nomorepostaction,shortest pattern}},
}}
\tikzset{xValiantxboxplot/.style={
	automatically generated boxplot,fill=Valiant color!20,
	/my/mark=star,
	every path/.style={postaction={nomorepostaction,Valiant pattern},}
}}
\tikzset{xOmnixUPxboxplot/.style={
	automatically generated boxplot,fill=KSP UGAL color!20,
	/my/mark=diamond,
	every path/.style={postaction={nomorepostaction,KSP UGAL pattern},}
}}
\tikzset{xDALxboxplot/.style={
	automatically generated boxplot,fill=UGAL color!20,
	/my/mark=pentagon,
	every path/.style={postaction={nomorepostaction,UGAL pattern},}
}}
\tikzset{xPolarizedxwithxladderxboxplot/.style={
	automatically generated boxplot,fill=POL-ESCAPE-HCO color!20,
	/my/mark=square,
	every path/.style={postaction={nomorepostaction,Polarized pattern},}
}}
\tikzset{xOmnixdimensionalxwithxRIIIxderoutesxboxplot/.style={
	automatically generated boxplot,fill=OMNI color!20,
	/my/mark=diamond,
	every path/.style={postaction={nomorepostaction,OmniDim pattern},}
}}
\tikzset{xPolarizedxUPxboxplot/.style={
	automatically generated boxplot,fill=POL-ESCAPE-DOR color!20,
	/my/mark=square,
	every path/.style={postaction={nomorepostaction,Polarized pattern},}
}}
\tikzset{xuniformxbar/.style={automatically generated bar plot,fill=red!20,postaction={pattern=horizontal lines},}}
\tikzset{xrandomxserverxpermutationxbar/.style={automatically generated bar plot,fill=green!20,postaction={pattern=grid},}}
\tikzset{xgloballyxshuffledxdestinationsxbar/.style={automatically generated bar plot,fill=blue!20,postaction={pattern=crosshatch},}}
\tikzset{xdimensionxcomplementxreversexbar/.style={automatically generated bar plot,fill=black!20,postaction={pattern=dots},}}
\tikzset{xxRIstxstaticxxblockxpermutationxtoxneighbourxbar/.style={automatically generated bar plot,fill=violet!20,postaction={pattern=north east lines},}}
\tikzset{xDCRxRIIDxbar/.style={automatically generated bar plot,fill=black!20,postaction={pattern=dots},}}
\tikzset{xRIVxRIVxblockxpermutationxtoxneighbourxbar/.style={automatically generated bar plot,fill=violet!20,postaction={pattern=north east lines},}}
\tikzset{xRIIIxRIVxblockxxxcolumnxtoxneighbourxbar/.style={automatically generated bar plot,fill=orange!20,postaction={pattern=vertical lines},}}
\pgfplotsset{
	load/.style={
		ymin=0,ymax=1
	},
	delay/.style={
		ymin=0,ymax=500,
	},
	hops/.style={},
	jain/.style={
		ymin=0.85,ymax=1,%
	}
}
\tikzset{uniform/.style={automatically generated plot,black,solid,mark=*}}
\tikzset{randomxserverxpermutation/.style={automatically generated plot,red,solid,mark=square*}}
\tikzset{dimensionxcomplementxreversex/.style={automatically generated plot,violet,solid,mark=triangle*}}
\tikzset{regularpermutationtoneighbour/.style={automatically generated plot,blue,solid,mark=diamond*}}
\tikzset{xuniformx/.style={uniform}}
\tikzset{xrandomxserverxpermutationx/.style={randomxserverxpermutation}}
\tikzset{xDCRxRIIDx/.style={dimensionxcomplementxreversex}}
\tikzset{xOmnixRRIRIIxwithxUpDownStarxcrossbothxprioxbar/.style={OMNI SP bar}}
\tikzset{xPolarizedxRRIRIIxwithxUpDownStarxcrossbothxprioxbar/.style={Polarized SP bar}}
\tikzset{xOmnixUpDownStarxcrossbothxpriox/.style={OMNI SP line}}
\def\xxxxBrinrxrandomxxxxxtext{bRINR}
\def\xxxxUgalxxxAlexxlabelxxxxxtext{sRINR}
\def\xxxxomnixxxxxtext{Omni-WAR}
\def\xxxxValiantxxxxxtext{MIN/VLB}
\def\xxxxEmbeddedxHyperXxRIIDxsourcexxxxxtext{TERA-HX2}
\def\xxxxEmbeddedxHyperXxRIIIDxxxxxtext{TERA-HX3}
\def\xxxxUgalxxxValiantxxxxxtext{UGAL}
\def\xxxxEmbeddedxHyperXxRIIDxxxxxtext{TERA-HX2}
\def\xomnixtext{Omni-WAR}
\def\xUgalxxxValiantxtext{UGAL}
\def\xNaturalxtxt{Natural}
\def\xRandomxtxt{Random}
\def\xxxxDORxSxHypercubexxxxxtext{DOR-TERA-HX3}
\def\xxxxOmnixWARxxxxxtext{Omni-WAR}
\def\xxxxDimWARxxxxxtext{DimWAR}
\def\xxxxORIxturnxSxHypercubexxxxxtext{O1TURN-TERA-HX3}
\def\xEmbeddedxHyperXxRIIDxsourcextext{TERA-HX2}
\def\xMeshxopportunisticxtext{TERA-PATH}
\def\TERATREEtext{TERA-4-Tree}
\def\TERAHXIIItext{TERA-HX3}
\def\UGALtext{UGAL}
\colorlet{Valiant}{blue}
\colorlet{TERA HX2}{red}
\colorlet{TERA HX3}{violet}
\colorlet{TERA path}{cyan}
\colorlet{TERA 4tree}{brown}
\colorlet{UGAL}{olive}
\colorlet{sRINR}{teal}
\colorlet{bRINR}{lime}
\colorlet{OmniWAR}{green!80!black}
\colorlet{DimWAR}{purple}
\colorlet{O1TURN TERA HX3}{pink}
\tikzset{
	Valiant/.style={color=Valiant,mark=star},
	TERA HX2/.style={color=TERA HX2,mark=o},
	TERA HX3/.style={color=TERA HX3,mark=square},
	TERA path/.style={color=TERA path,mark=triangle},
	TERA 4tree/.style={color=TERA 4tree,mark=Mercedes star flipped},
	UGAL/.style={color=UGAL,mark=triangle},
	sRINR/.style={color=sRINR,mark=Mercedes star flipped},
	OmniWAR/.style={color=OmniWAR,mark=diamond},
	bRINR bar/.style={fill=bRINR!20,postaction={pattern=horizontal lines}},
	sRINR bar/.style={fill=sRINR!20,postaction={pattern=grid}},
	UGAL bar/.style={fill=UGAL!20,postaction={pattern=crosshatch}},
	Valiant bar/.style={fill=Valiant!20,postaction={pattern=north east lines}},
	OmniWAR bar/.style={fill=OmniWAR!20,postaction={pattern=horizontal lines}},
	TERA HX2 bar/.style={fill=TERA HX2!20,postaction={pattern=north east lines}},
	TERA HX3 bar/.style={fill=TERA HX3!20,postaction={pattern=grid}},
	DimWAR bar/.style={fill=DimWAR!20,postaction={pattern=north east lines}},
	O1TURN TERA HX3 bar/.style={fill=O1TURN TERA HX3!20,postaction={pattern=crosshatch}},
}
\tikzset{xMeshxopportunisticx/.style={TERA path}}
\tikzset{xEmbeddedxHyperXxRIIDxsourcex/.style={TERA HX2}}
\tikzset{xValiantx/.style={Valiant}}
\tikzset{xUgalValiantx/.style={UGAL}}
\tikzset{xRIVxtreexopportunisticx/.style={TERA 4tree}}
\tikzset{hamingthreedimensions/.style={TERA HX3}}
\tikzset{xxxxUgalxxxValiantxxxxx/.style={UGAL}}
\tikzset{xxxxUgalxxxAlexxlabelxxxxx/.style={sRINR}}
\tikzset{xxxxEmbeddedxHyperXxRIIDxsourcexxxxx/.style={TERA HX2}}
\tikzset{xxxxomnixxxxx/.style={OmniWAR}}
\tikzset{xxxxValiantxxxxx/.style={Valiant}}
\tikzset{xxxxEmbeddedxHyperXxRIIIDxxxxx/.style={TERA HX3}}
\tikzset{xxxxBrinrxrandomxxxxxbar/.style={bRINR bar}}
\tikzset{xxxxUgalxxxAlexxlabelxxxxxbar/.style={sRINR bar}}
\tikzset{xxxxUgalxxxValiantxxxxxbar/.style={UGAL bar}}
\tikzset{xxxxValiantxxxxxbar/.style={Valiant bar}}
\tikzset{xxxxomnixxxxxbar/.style={OmniWAR bar}}
\tikzset{xxxxEmbeddedxHyperXxRIIDxxxxxbar/.style={TERA HX2 bar}}
\tikzset{xxxxEmbeddedxHyperXxRIIIDxxxxxbar/.style={TERA HX3 bar}}
\tikzset{xxxxOmnixWARxxxxxbar/.style={OmniWAR bar}}
\tikzset{xxxxDimWARxxxxxbar/.style={DimWAR bar}}
\tikzset{xxxxORIxturnxSxHypercubexxxxxbar/.style={O1TURN TERA HX3 bar}}
\tikzset{xxxxDORxSxHypercubexxxxxbar/.style={TERA HX3 bar}}
\newtheorem{theorem}{Theorem}[section]
\newtheorem{definition}[theorem]{Definition}
\newtheorem{claim}[theorem]{Claim}
\begin{document}

    \title{Deadlock-free routing for Full-mesh networks without using Virtual Channels \\
    {\large postprint version}\\
    {\small A. Cano, C. Camarero, C. Martínez, and R. Beivide, “Deadlock free routing for full-mesh networks without using virtual channels,” in HOTI25: 32nd IEEE Hot Interconnects symposium. IEEE, 2025. }\\
    {\small\url{https://doi.org/10.1109/HOTI66940.2025.00020}}\\
    }

    \author{Alejandro Cano \and Cristóbal Camarero \and Carmen Martínez \and Ramón Beivide \and \\
    {\small \{alejandro.cano, cristobal.camarero, carmen.martinez, ramon.beivide\}@unican.es}
    }

    \date{}
    \maketitle

    \begin{abstract}
        High-radix, low-diameter networks like HyperX and Dragonfly use a Full-mesh core, and rely on multiple virtual channels (VCs) to avoid packet deadlocks in adaptive routing.
        However, VCs introduce significant overhead in the switch in terms of area, power, and design complexity, limiting the switch scalability.
        This paper starts by revisiting VC-less routing through link ordering schemes in Full-mesh networks, which offer implementation simplicity but suffer from performance degradation under adversarial traffic.
        Thus, to overcome these challenges, we propose \textbf{TERA} (Topology-Embedded Routing Algorithm), a novel routing algorithm which employs an embedded physical subnetwork to provide deadlock-free non-minimal paths without using VCs.

        In a Full-mesh network, TERA outperforms link ordering routing algorithms by 80\% when dealing with adversarial traffic, and up to 100\% in application kernels.
        Furthermore, compared to other VC-based approaches, it reduces buffer requirements by 50\%, while maintaining comparable latency and throughput.
        Lastly, early results from a 2D-HyperX evaluation show that TERA outperforms state-of-the-art algorithms that use the same number of VCs, achieving performance improvements of up to 32\%.
    \end{abstract}

    \section{Introduction}
   High-radix, low-diameter networks are the foundation of many modern interconnect architectures in supercomputers and data centers.
   Topologies such as Dragonfly~\cite{Kim_dgfly_ISCA} and HyperX~\cite{HyperX, kim_flat_CAL} rely on a Full-mesh (FM) core.
   In FM networks every switch is directly connected to every other switch, or said equivalently, the graph of the network is a complete graph, which has diameter 1.
   In Figure~\ref{fig:fm} a FM with four switches and four servers per switch is represented.

    \emph{Minimal routing} (MIN) in a FM—which involves sending packets directly from source to destination—does not introduce cyclic buffer dependencies and is inherently deadlock-free.
    Furthermore, MIN delivers excellent performance under uniform traffic loads, on which packets are evenly distributed across all network paths.

    However, under adversarial or bursty traffic loads, MIN can lead to significant load imbalance and performance degradation.
    In such scenarios, some links become congested while others remain idle.
    To enhance overall throughput, non-minimal paths are employed, detouring packets through intermediate switches to improve network utilization.
    In FM networks, non-minimal paths typically consist of two hops: from the source to an intermediate switch, and from the intermediate switch to the destination.
    These paths are usually based on the Valiant Load-Balancing (VLB) scheme~\cite{Valiant_ACM}, which routes each packet through a randomly selected intermediate switch before forwarding it to its destination.

    \begin{figure}
        \centering
        \includegraphics[scale=0.6]{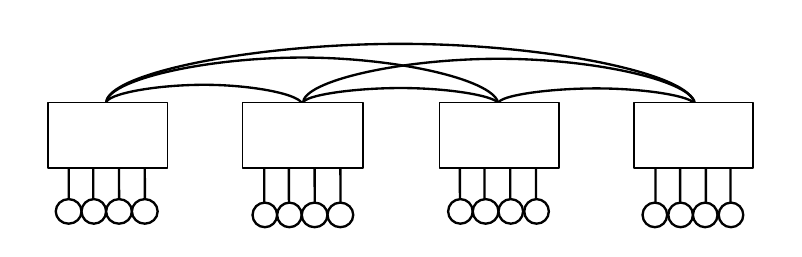}
        \caption{Full-mesh network with 16 servers: 4 switches with 4 servers per switch.}
        \label{fig:fm}
    \end{figure}

    To improve performance, adaptive routing algorithms are responsible for selecting between minimal and non-minimal paths based on network conditions, such as buffer occupancy or link utilization.
    However, unlike MIN paths, non-minimal routes in a FM introduce cyclic dependencies, potentially leading to packet deadlocks.
    Therefore, routing in FM networks must support deadlock-free operation for any mix of MIN and non-MIN routes.
    Two common strategies to avoid routing deadlocks in FM networks are:
    \begin{itemize}
        \item \textbf{Virtual Channels} (or \emph{buffer restriction}).
        This approach involves having multiple virtual channels (VCs or FIFO buffers) within each switch port, allowing packets to avoid cycles by traversing different VCs under a certain order.
        Some proposals include Virtual Ordered Buffered Classes~\cite{Gunter, Merlin}, or escape channels~\cite{Duato_general}.
        \item \textbf{Link ordering} (or \emph{path restriction}).
        This method assigns labels to links in the network, ensuring that packets follow a specific order of the links that avoids deadlocks~\cite{BoomGate, turnmodel}.
    \end{itemize}

    The first approach is effective and capable of achieving optimal performance.
    However, it requires additional buffering resources, which occupy a significant portion of the switch area and increase power consumption as well as the complexity of the switch design.
    Adaptive routing algorithms for FM-based networks, such as UGAL~\cite{Singh} and Omni-WAR~\cite{Kim_omni} exemplify this class of solutions.

    The second approach avoids the need for VCs and is simpler to implement.
    Nevertheless, it can result in uneven path utilization and performance degradation under adverse traffic conditions.
    Routing schemes such as bRINR~\cite{BoomGate} and Up*/Down*~\cite{autonet} are representative of this method.

    In this paper, we first examine the link ordering approach in Section~\ref{sec:ordering}, exploring its limitations when applying to FM networks.
    We prove that link ordering schemes impose inherent performance bounds, motivating the need for an alternative strategy.
    To this end, we propose TERA (Topology-Embedded Routing Algorithm), a novel mechanism that enables deadlock-free non-minimal routing in a FM without relying on VCs, while maintaining high performance.
    TERA is evaluated against state-of-the-art routings for FM networks, including Omni-WAR~\cite{Kim_omni}, a well-established scheme known for its high throughput and low latency.

    The paper is organized as follows.
    In Section~\ref{sec:motivation}, the motivation behind our work is discussed.
    Section~\ref{sec:ordering} is devoted to exploring link orderings, showing that these schemes have inherent limitations.
    Section~\ref{sec:TERA} introduces TERA, a VC-less deadlock-free routing algorithm for Full-mesh networks.
    Using the different scenarios described in Section~\ref{sec:methodology}, TERA is evaluated and compared against other algorithms in Section~\ref{sec:results}.
    Finally, in Section~\ref{sec:conclusion}, the paper contributions are summarized.

    \section{Motivation}\label{sec:motivation}

    \subsection{Buffering resources}
    Buffers dominate the switch area and take away an important part of the power budget~\cite{Scott,dai2017scalable,BoomGate,bufferconsumption}.
    Modern high-performance switches in supercomputer and data center networks require extensive buffering resources, presenting two challenges, as described next.

    \subsubsection{Buffer Size}
    Several factors condition the buffer depth required in a switch.
    First, the Round-Trip Time (RTT) delay must be considered.
    The buffer size should be enough to cover the RTT latency to have a continuous transmission.
    It is directly related to the bandwidth of the switch-ports and to the distances covered by the wires. 
    In addition, sufficient buffer depth is necessary to absorb bursts of traffic~\cite{slingshot}.

    On top of that, common solutions must accommodate enough VCs in each network port to guaranty deadlock-free routing and, in some cases, to avoid protocol deadlock as well as to support multiple Quality-of-Service (QoS) levels.

    Conventional deadlock-avoidance mechanisms for packet routing—such as those proposed in~\cite{Gunter, Merlin, Duato_general}—typically require a minimum of two virtual channels when applied to FM topologies.
    This requirement scales with the number of QoS levels supported by the network.
    For example, in a system offering 16 QoS levels and using 2 VCs per level to ensure deadlock-free adaptive routing, each port would require a total of 32 buffers.
    Furthermore, if protocol-level deadlock must also be prevented, the VC count must be doubled, resulting in 64 VCs per port.

%


\subsubsection{Buffer Utilization Inefficiency}

     Common deadlock avoidance schemes for adaptive routing algorithms, such as UGAL~\cite{Singh} and Omni-WAR~\cite{Kim_omni}, rely on adding buffers to support minimal and non-minimal paths.
    These algorithms require at least 2 VCs per port to be deadlock-free in a FM\@.
    The first VC is used for minimal or non-minimal hops, while the second is used only for non-minimal hops.
    Consequently, when only minimal paths are employed, merely half of the available buffers are utilized.
    In general for any topology, the highest-order virtual channels remain mostly unused, while the lowest-ordered buffers present a high occupancy.
    This results in a significant waste of resources.

    Furthermore, prior work~\cite{canosbacpad} has demonstrated that injecting MIN packets on VCs other than the lowest-ordered ones introduces network instability.
    Thus, there is no clear way to fully utilize all available buffers.

    \subsection{High-radix low-diameter networks}

    Notable topologies, such as Dragonfly, HyperX, and Dragonfly+ \cite{dragonfly_plus}  \cite{megafly}, utilize FMs at the group or dimension level.
    A Dragonfly is employed in the current first three top-ranked supercomputers: El Capitan, Frontier, and Aurora.
    The three of them rely on complete graph intra-group and inter-group connectivity.
    Other supercomputers in the Top500~\cite{Top500} also implement these class of topologies.

    The routing mechanisms employed in these networks demand a high number of VCs to support deadlock-free routing.
    Typical requirements include 4 VCs in a 2D-HyperX, 6 VCs in a 3D-HyperX, and 4/2 VCs in Dragonfly for local/global ports.
    This stems from the fact that routing in a FM is not inherently deadlock-free when non-minimal paths are used.
    Thus, avoiding the necessity of VCs for deadlock-free routing in the FM core, directly translates into lower buffer requirements for the larger topologies built upon it.

    Importantly, some network technologies, such as InfiniBand~\cite{schneider2016ensuring}, support adaptive routing but do not permit in-transit VC shifting.
    As a result, deadlock-free routing schemes that do not rely on VCs are particularly desirable in these contexts.

 \section{Link ordering Schemes without VCs}
    \label{sec:ordering}

    In link ordering schemes, each directed link (arc) is assigned a number or \textsl{label}.
    Any valid path must follow a strictly increasing sequence of labels, which prevents deadlock.
    In a FM, depending on the selected ordering, the number of usable non-minimal paths can vary, as well as the proportion of such paths between pairs of switches.
    As we will see next, there is a trade-off between maximizing the number of non-minimal paths and ensuring a fair distribution of them between pairs of switches.
    This inherent property complicates the applicability of the method and suggests the search of routing mechanisms based on other principles.

    From now on, we will formally refer to the Complete graph, the underlying topology of the Full-mesh network.
    \begin{definition}
		The \textsl{Complete Graph} over the set of vertices $V=\{0,1,\dotsc,n-1\}$, denoted as $K_n$, is the graph where every pair of distinct vertices is connected.
		This is, the set of edges is $E=\{ \{a,b\} \mid a,b\in V,\ a\neq b \}$.
    \end{definition}

	Then, its number of links is $m=|E|=\binom{n}{2}=\frac{n(n-1)}{2}$. 
    From a source node to a destination node, there is 1 minimal path of length 1 and $n-2$ paths of length 2.
    As there are $n(n-1)$ pairs of nodes, there are a total of $n(n-1)(n-2)$ non-minimal paths in the graph.


Allowing the use of all paths of length 2 for routing without VCs results in deadlock.
To avoid it, in~\cite{BoomGate}, the authors limit these 2-hop paths by using bRINR (\textsl{balanced} Restricted Intermediate-node Non-minimal Routing), aimed at balancing intermediate nodes while maximizing path diversity.
This routing algorithm attains the maximum number of possible non‐minimal paths for any link ordering scheme $\tfrac{2}{3}n(n-1)(n-2)$ but suffers from uneven path distribution.
It also ensures that each pair of switches has at least 2 intermediates.
However, some links are overloaded because they are used by many source/destination pairs.
This causes bottlenecks and hotspots in specific switches.

Before proposing a new link ordering scheme, let us show a result that proves that pursuing equalized utilization of all the links is a limiting factor in the number of possible non-minimal paths.
The proof is included in Appendix~\ref{app:link_ordering}.

    \begin{theorem}
		If a routing scheme for a FM is based on an ordering of the arcs and ensures that all links can be used by the same number of source/destination pairs, then the number of allowed paths of length 2 is $\frac{1}{2}n(n-1)(n-2)$.

        \label{claim:utilizationfairpaths}
    \end{theorem}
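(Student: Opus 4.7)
The plan is to obtain the exact value $P = \frac{1}{2}n(n-1)(n-2)$ from one global count together with one pointwise computation, with no separate attainment argument needed. The key idea is that a carefully chosen extremal arc has its usage pinned to an exact value (not merely bounded), so the fairness hypothesis forces the common usage $L$ to a single value, and the equality for $P$ then drops out at once.

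First I would set up the double-counting identity. For an ordered source/destination pair $(s,d)$, the arcs appearing in any allowed route are the minimal arc $s\to d$ together with the two arcs $s\to v$ and $v\to d$ for each of the $k_{s,d}$ allowed intermediates $v$. A short check (comparing endpoints) shows that these $1+2k_{s,d}$ arcs are pairwise distinct, so summing over the $n(n-1)$ pairs gives $n(n-1)+2P$ pair--arc incidences. Counting the same quantity arc by arc, under the fairness hypothesis, gives $n(n-1)\cdot L$. Equating yields $P = \frac{n(n-1)(L-1)}{2}$, so the whole theorem reduces to proving $L=n-1$.

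To pin down $L$, I would analyse the arc $e^{*} = v^{*}\to u^{*}$ carrying the smallest label. Its usage decomposes into three disjoint contributions. The pair $(v^{*},u^{*})$ uses $e^{*}$ as a minimal arc, contributing $1$. For every $d\neq v^{*},u^{*}$ the pair $(v^{*},d)$ uses $e^{*}$ as the first arc of the non-minimal route $v^{*}\to u^{*}\to d$, because the label condition $\ell(e^{*})<\ell(u^{*}\to d)$ is automatic at the global minimum; this contributes $n-2$. No pair $(s,u^{*})$ can use $e^{*}$ as a second arc, since $\ell(s\to v^{*})<\ell(e^{*})$ is impossible; this contributes $0$. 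Hence $e^{*}$ is used by \emph{exactly} $n-1$ pairs, fairness forces $L=n-1$, and substituting into the identity above gives $P = \frac{1}{2}n(n-1)(n-2)$.

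The main obstacle, and the reason a generic-arc analysis yields only the far weaker estimate $L\le 2n-3$, is the need to choose an arc whose usage is \emph{forced} rather than merely bounded above. The extremal-label arc is exactly the right candidate: its label position saturates one of its two directional roles and simultaneously annihilates the other, so both contributions are exact and the sum is a single determined integer. The only bookkeeping step requiring care is verifying that the three pair-sets entering the usage of $e^{*}$ are pairwise disjoint, which is immediate from the distinctness of the endpoints involved. The symmetric computation applied to the largest-label arc provides an independent sanity check, returning the same value $n-1$ and confirming the conclusion from the opposite end of the ordering.
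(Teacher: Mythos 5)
Your proof is correct and follows essentially the same route as the paper's: a double-counting of arc usages combined with evaluating the usage of the minimum-label arc to pin down the common constant. The only cosmetic difference is that you include the minimal-path use of each arc in the count (so your constant is $n-1$ rather than the paper's $n-2$ for length-2 paths only), which changes nothing substantive.
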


This proves the trade-off between opportunities for adaptiveness and load-balancing.
The bRINR scheme~\cite{BoomGate} maximizes the number of possible intermediate nodes, on average.
However, as such value is above the expressed in the previous theorem, this implies that some arcs are used more than others.
As we will see, this imbalance manifests as a performance degradation for some traffic patterns.
This is due to unfairness issues which grow with the network size.

Then, it is natural to look for a different scheme looking for a perfect balance of non-minimal routes between switches by sacrificing the number of selectable paths.
This trade-off is similar to the one between the Turn Model and DOR routing~\cite{Dally}, where the former offers more paths but the latter achieves a fairer path distribution and higher overall performance.
Next, we introduce sRINR (\textsl{symmetric} RINR), a new link ordering that prioritizes a fair distribution of paths instead of attempting to maximize their number.



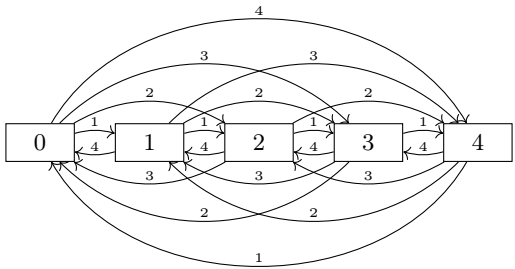
\begin{figure}
    \centering%
    \def\hud{\mbox{}\hskip 2ex plus 2ex minus 2ex\mbox{}}%
    \begin{tikzpicture}[
        scale=1.2,
        switch/.style={draw,inner sep=1.5pt,font=\small,minimum height=0.5cm,minimum width=0.9cm},
    ]
        \def\N{4}
        \pgfmathtruncatemacro\pN{\N-1}

        \foreach \y in {0,...,\N}
            {
            \node[switch] (nodo\y) at (1.2*\y, 0) {\y};
        }
        \foreach \xa in {0,...,\N}
            {
            \foreach \port in {1,...,\N}
                {
                \pgfmathtruncatemacro\sumswitch{\xa+\port}
                \pgfmathtruncatemacro\xb{Mod(\sumswitch, \N+1)}
                \ifthenelse{ \sumswitch > \N }{
                    \pgfmathtruncatemacro\angle{(\xb-\xa)*15}
                    \draw (nodo\xb) edge[out=\angle,in=180-\angle, <-] node[above, yshift=-0.2em] {\tiny \port} (nodo\xa);
                }{
                    \pgfmathtruncatemacro\angle{(\xb-\xa)*15}
                    \draw (nodo\xa) edge[out=\angle,in=180-\angle, ->] node[above, yshift=-0.2em] {\tiny \port} (nodo\xb);
                }
            }
        }
    \end{tikzpicture}%
    \caption{K$_4$ with sRINR labelling.}
    \label{fig:labelledCG}%
\end{figure}


    \begin{definition}\label{def:sRINR}
        The sRINR ordering scheme assigns a number to an arc based on the distance between the switches it connects.
        For two switches $i$ and $j$ ($i \neq j$) connected by an edge, the arc from $i$ to $j$ is assigned the number $D(i,j)\in [0,n-1]$ with $D(i, j) \equiv (j - i) \mod{n}$.
    \end{definition}

Figure~\ref{fig:labelledCG} represents a K$_4$ labelled in this way.
The following claim bounds the number of intermediates.
The proof is included in Appendix~\ref{app:link_ordering}.

    \begin{claim}
       In the sRINR ordering, the minimum number of allowed intermediates for a source/destination pair is $\frac{n-4}{2}$.
        \label{claim:intermediates_forward}
    \end{claim}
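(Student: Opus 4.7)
The plan is to exploit the translation invariance of sRINR to reduce to a one-parameter counting problem, characterize the valid intermediates by an elementary inequality on the labels, and close with careful parity casework.

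First, I would observe that the labelling $D(i,j) = (j-i)\bmod n$ depends only on $j - i$, so the number of length-two paths allowed by the ordering between two switches $s$ and $d$ depends only on $\delta := D(s,d) \in \{1,\dots,n-1\}$. Setting $s = 0$ and $d = \delta$ without loss of generality, a switch $k \in \{1,\dots,n-1\} \setminus \{\delta\}$ is an allowed intermediate exactly when the two labels along $s \to k \to d$ are strictly increasing, i.e., when $D(s,k) < D(k,d)$, which rewrites as $k < (\delta - k)\bmod n$.

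Next, I would split the intermediates by whether $k < \delta$ or $k > \delta$. When $k < \delta$, one has $D(k,d) = \delta - k$ and the inequality becomes $k < \delta/2$, contributing $\lfloor (\delta-1)/2 \rfloor$ intermediates. When $k > \delta$, one has $D(k,d) = n + \delta - k$ and the inequality becomes $k < (n+\delta)/2$, contributing $\lfloor (n+\delta-1)/2 \rfloor - \delta$ intermediates. The strict inequality cleanly excludes the values for which $2k \equiv \delta \pmod n$ (which would yield equal labels and hence no valid path), so no further exclusions are needed.

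Finally, adding the two counts and doing casework on the parities of $n$ and $\delta$ gives: the total equals $(n-4)/2$ when both $n$ and $\delta$ are even; it equals $(n-2)/2$ when $n$ is even and $\delta$ is odd; and it equals $(n-3)/2$ when $n$ is odd. In every case the count is at least $(n-4)/2$, with equality achieved by any even $\delta$ when $n$ is even, establishing the claim. The main obstacle is purely the bookkeeping of floor functions across parities; nothing is conceptually hard, but it is easy to drop a $\pm 1$, so I would sanity-check against $n = 6$ (minimum $1$ attained at $\delta \in \{2,4\}$) and $n = 5$ (minimum $1$ attained at every $\delta$) before committing to the case analysis.
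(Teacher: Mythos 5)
Your proof is correct, and it reaches exactly the same case-by-case counts as the paper ($\frac{n-4}{2}$ for $n$ even with $s,d$ of equal parity, $\frac{n-2}{2}$ for $n$ even with opposite parities, $\frac{n-3}{2}$ for $n$ odd), but by a genuinely different route. The paper's argument is a symmetry one: it introduces $G_{ab}(i) = D(i,b) - D(a,i)$, observes the antisymmetry $G_{ab}(\frac{a+b}{2}+x) = -G_{ab}(\frac{a+b}{2}-x)$, concludes that valid and invalid intermediates come in pairs reflected about the midpoint $\frac{a+b}{2}$, and then only has to count the fixed points of that reflection, i.e.\ the solutions of $2y \equiv a+b \pmod n$ (one for $n$ odd, zero or two for $n$ even depending on the parity of $a+b$). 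Your argument instead normalizes to $s=0$, $d=\delta$ by translation invariance and directly counts the two arithmetic ranges $1 \le k < \delta/2$ and $\delta < k < (n+\delta)/2$, which I have checked gives $\lfloor\frac{\delta-1}{2}\rfloor + \lfloor\frac{n+\delta-1}{2}\rfloor - \delta$ and collapses to the same three values. The paper's pairing buys a cleaner conceptual explanation of \emph{why} roughly half the switches are valid (and isolates the parity dependence in a single congruence), while your direct count is more self-contained and mechanical, at the cost of the floor-function bookkeeping you acknowledge; your explicit sanity checks at $n=6$ and $n=5$ adequately guard against the off-by-one risk. One shared caveat, present in both proofs: the stated bound $\frac{n-4}{2}$ is the minimum only for even $n$ (the relevant case in the paper, e.g.\ $n=64$); for odd $n$ the minimum is $\frac{n-3}{2}$, as both you and the paper make explicit.
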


For $n\geq 8$ this number is higher than the minimum number of intermediates provided by bRINR.


Along with this theoretical analysis, Section~\ref{subsec:link_ordering} provides an evaluation of both sRINR and bRINR.
The results show that sRINR clearly outperforms bRINR.
Nevertheless, its overall performance remains below state-of-the-art routing algorithms based on VCs, indicating that link ordering schemes still impose strict limitations.



    \section{The Topology Embedded Routing Algorithm}
    \label{sec:TERA}


    In this section we introduce TERA, a deadlock-free routing algorithm conceived to address the limitations of the link ordering schemes.
    Note that a complete graph has a very rich connectivity, and TERA takes advantage of this fact by breaking the topology into two physical parts.
    The main part should contain most of the links and be rich on allowed routes, so it can be freely used to route packets.
    The remaining part must correspond to a topology that serves in avoiding deadlock without using VCs.
    Let us define more formally the two topologies composing TERA.

%
%

    \begin{definition}[Main and service topologies] \label{def:subgraph}
        For a FM$_n$, a \textsl{service topology} $S$ is an embedded spanning topology using a deadlock-free minimal routing algorithm.
        By \textsl{main topology} $M$ we refer to the topology with the remaining links.
        A link is said a \textsl{service} link if it belongs to the service topology.
    \end{definition}

    Note that the service topology must span all the $n$ switches from the FM, as it has to be able to route from any source to any destination.
    There are different topologies that provide a deadlock-free set of minimal paths. 
    This is the case, for example, of trees using up/down routing or meshes and hypercubes using DOR~\cite{Dally}.

  	In Figure~\ref{fig:mesh_embedded}, two FM$_4$ are illustrated: one with a 2-tree (or 1D-mesh) service topology and the other with a Hypercube (or 2D-mesh in this case).
    Service links are represented with solid lines, while main links are depicted with dashed lines.
    Note that any path between switches through service links for the embedded 1D-Mesh is minimal, the same as the DOR paths for the hypercube.

        \begin{figure}
    \centering%
    \def\hud{\mbox{}\hskip 4ex plus 4ex minus 4ex\mbox{}}%
    \begin{tikzpicture}[
        scale=1.4,
        switch/.style={draw,inner sep=1.5pt,font=\small,minimum height=0.5cm,minimum width=0.9cm},
        every edge/.append style={overlay},
		safe/.style={very thick},
		complement/.style={very thin,dashed},
    ]
        \node[switch] (nodo00) at (0, 0) {0};
        \node[switch] (nodo01) at (0, 1) {2};
        \node[switch] (nodo10) at (1, 0) {1};
        \node[switch] (nodo11) at (1, 1) {3};
        \draw (nodo00) edge[safe,out=0,in=-180] (nodo10);
        \draw (nodo01) edge[safe,out=0,in=-180] (nodo11);
        \draw (nodo00) edge[complement,out=90,in=-90] (nodo01);
        \draw (nodo10) edge[complement,out=90,in=-90] (nodo11);
        \draw (nodo00) edge[complement,out=45,in=-135] (nodo11);
        \draw (nodo01) edge[safe, out=-45,in=135] (nodo10);
        \node at (0.5, -0.5) {Path};
    \end{tikzpicture}%
    \hud
    \begin{tikzpicture}[
        scale=1.4,
        switch/.style={draw,inner sep=1.5pt,font=\small,minimum height=0.5cm,minimum width=0.9cm},
        every edge/.append style={overlay},
		safe/.style={very thick},
		complement/.style={very thin,dashed},
    ]
        \node[switch] (nodo00) at (0, 0) {0};
        \node[switch] (nodo01) at (0, 1) {2};
        \node[switch] (nodo10) at (1, 0) {1};
        \node[switch] (nodo11) at (1, 1) {3};
        \draw (nodo00) edge[safe,out=0,in=-180] (nodo10);
        \draw (nodo01) edge[safe,out=0,in=-180] (nodo11);
        \draw (nodo00) edge[safe,out=90,in=-90] (nodo01);
        \draw (nodo10) edge[safe,out=90,in=-90] (nodo11);
        \draw (nodo00) edge[complement,out=45,in=-135] (nodo11);
        \draw (nodo01) edge[complement,out=-45,in=135] (nodo10);
        \node at (0.5, -0.5) {Hypercube};
    \end{tikzpicture}%
    \hud
    \caption{$K_4$ with a Path, and a Hypercube embedded.}%
    \label{fig:mesh_embedded}%
\end{figure}
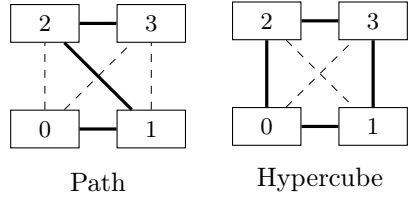

        TERA is described in Algorithm~\ref{alg:two}.
        The following notation is used in the pseudocode:

    \begin{itemize}
        \item The set $R_{\mathrm{main}}(x)$ contains the indexes of the ports of switch $x$ that belong to the main topology.
        \item The set $R_{\mathrm{serv}}(x, y)$ contains the indexes of the ports in switch $x$ that follow a service path to $y$.
        \item The set $R_{\mathrm{min}}(x, y)$ contains the index of the port in a switch $x$ that connects to the switch $y$ as a single element.
    \end{itemize}



    As it can be seen, TERA always evaluates the option of taking a service and a MIN path.
    But, if the packet is at the injection ports of the source switch, it also includes the option of taking any port from the main topology as a non-minimal hop.

    \SetKwComment{Comment}{/* }{ */}
    \RestyleAlgo{ruled}

    \begin{algorithm}
        \caption{TERA}\label{alg:two}
        \KwIn{Current switch: $current$, destination switch: $destination$}
        \KwOut{port}
        $ports \gets R_{\mathrm{serv}}(current, destination)$

        \uIf{\text{packet is at an injection port}}{
            $ports \gets ports \cup R_{\mathrm{main}}(current)$
        }
        \Else{
            $ports \gets ports \cup R_{\mathrm{min}}(current, destination)$
        }

        $candidates \gets \emptyset$

        \For{each $p$ in $ports$}{
            \uIf{$p$ connects to $destination$}{
                Insert $(p, occupancy[p])$ into $candidates$ \Comment{\small The occupancy is the weight}
            }
            \Else{
                Insert $(p, occupancy[p] + q)$ into $candidates$ \Comment{\small Penalize non-minimal paths}
            }
        }

        \Return the port in $(port,weight)\in candidates$ with the minimum $weight$. Ties are broken randomly.
    \end{algorithm}


In the algorithm, $q$ denotes a penalty applied to non-minimal paths.
It will be set in Section~\ref{sec:methodology}

TERA is deadlock-free.
This can be clearly understood by noting that every packet always has a valid routing option along a service path, and packets on these paths can always make forward progress.
Consequently, if a packet can't advance in the main network, sufficient buffer space will eventually free up in the service path, allowing it to proceed and preventing deadlock.

Furthermore, TERA is also livelock-free, and the maximum number of hops a packet can do is 1 plus the diameter of the service topology.


%
%


    \subsection{Evaluation of service topologies}\label{sub:safe-graph}

   The candidates for the service topology have to fulfill four key criteria:

    \begin{itemize}
        \item \textbf{Deadlock-Free Minimal Routing}: The topology must have a deadlock-free minimal routing algorithm that does not require VCs.
        \item \textbf{Edge and Vertex Symmetry}: A symmetric topology achieves a balanced performance in the network. Asymmetries can lead to congestion \cite{CamareroTC}.
        \item \textbf{Low Diameter}: Long service paths allow for worst-case scenarios, as they can spread congestion throughout the whole service topology.
        \item \textbf{Bounded Number of Links}: A high number of links imposes more restrictions on non-minimal paths, reducing the path diversity of the main topology.
    \end{itemize}

    \begin{table}
        \centering
        \begin{tabular}{|c|c|c|c|c|c|c|}
            \hline
            \textbf{Topology} & \textbf{Sym.} & \textbf{Diameter} & \textbf{$\#$Links} & \textbf{Routing} \\
            \hline
            $d$-Mesh & No & $O(n^{\frac{1}{d}})$ & $O(dn)$  & DOR\\
            \hline
            $k$-Tree & No & $O(\log_k n)$ & $O(n)$ & Up/Down \\
            \hline
            Hypercube & Yes & $O(\log_2 n)$ & $O(n\log n)$ & DOR \\
            \hline
            $d$-HyperX & Yes & $O(d)$ & $O(dn^{1+\frac{1}{d}})$ & DOR \\
            \hline
        \end{tabular}
        \caption{Properties of different service topologies.}\label{tab:comparison_safe_topologies}
    \end{table}

    Table~\ref{tab:comparison_safe_topologies} compares these properties across four topology families.
    Mesh topologies have low degree, which entails a small number of links.
    This provides the main topology a greater connectivity and higher non-minimal bandwidth.
    However, their major drawback is a high diameter, which grows linearly with $n$.

    Another option are $k$-trees, which reduce the diameter to $O(\log_k n)$ and employ a low number of links.
    However, they are not symmetric: the root becomes a central point of contention, and leaf nodes typically have only one upward link, creating bottlenecks.

    Lastly, the HyperX topology is a notable option, as it is a symmetric low-diameter topology.
    In particular, the 2D-HyperX and 3D-HyperX are considered the most suitable candidates.
    The 2D-HyperX topology has a diameter of 2, which is the lowest possible for an embedded topology.


	Using a service topology with a high degree, such as HyperX, reduces the number of links available for the main topology, limiting non-minimal path diversity.
	To better grasp the impact of this, let us estimate the throughput obtained by TERA with different service topologies under adverse random switch permutation traffic.
	This is shown in Figure~\ref{fig:safe_links_ratio}.
	Specifically, the curves follow the expression $\frac{1}{1 + p^{-1}}$, where $p$ is the degree of the main topology divided by $n-1$.
This is proved in Appendix \ref{app:TERA}.
For small FM sizes, the differences in performance among service topologies can be notable.
However, as the FM size increases, the curves converge.
In Section~\ref{subsec:seleccion}, the analysis demonstrates that the HyperX topology, with its optimal combination of low diameter and symmetry, represents an ideal choice.

    \begin{figure}
        \centering
        \includegraphics[width=0.5\textwidth]{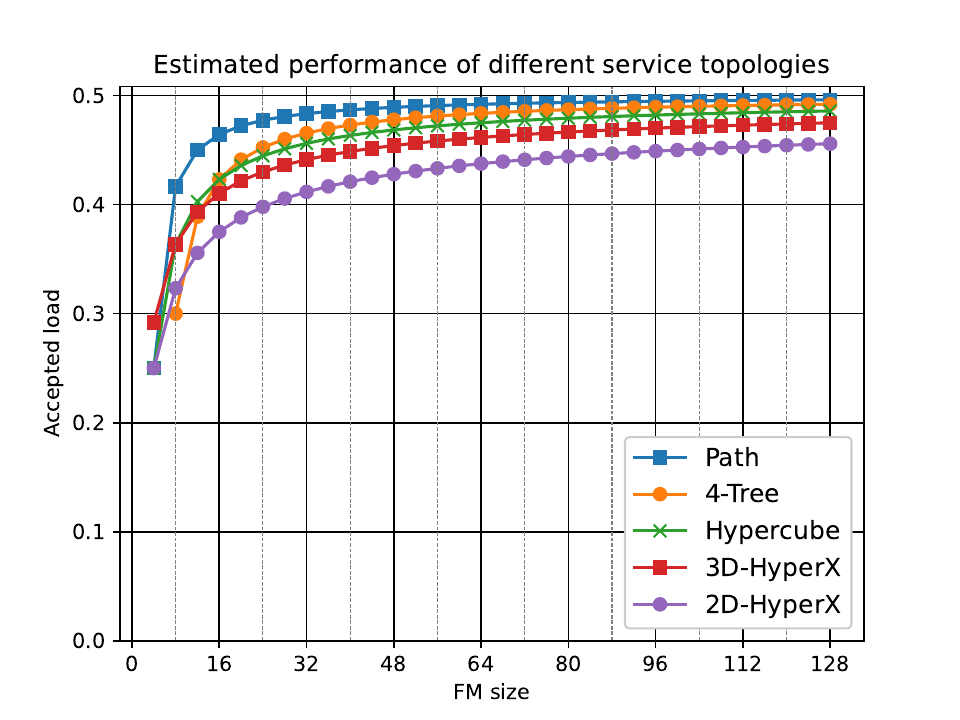}
        \caption{Estimated throughput for different service topologies.}
        \label{fig:safe_links_ratio}
    \end{figure}

    \section{Methodology}
    \label{sec:methodology}

    To compare the performance of the routing mechanisms studied in this work, we use the CAMINOS~\cite{CAMINOS,CAMARERO2025105136} network simulator.
    It is an event-driven simulator that models the switch microarchitecture and operates at flit level.
    The switch operates with a $2\times$ speedup and a random allocator.
    It provides buffer space for 10 packets per virtual channel at the input ports and 5 packets per virtual channel at the output ports.
    Each packet consists of 16 flits~\footnote{In \hbox{\url{https://github.com/alexcano98/TERA-routing-HOTI-2025-reproducibility}} full detail of our experimental setup, including the TERA implementation and simulator version is available.}

    The evaluated routings are MIN, bRINR, sRINR, TERA, Omni-WAR, UGAL and Valiant.
    TERA will use different service topologies.
    This will be indicated as a suffix in the acronym; for example, TERA-HX2, means TERA using a 2D-HyperX as service topology.

    With respect to the TERA pseudocode in Algorithm~\ref{alg:two}, after an experimental sweep, 54 was determined as a suitable penalty $q$ for the parameters used in our experiments.
    This value implies a penalty similar to the existence of slightly more than 3 packets in the buffer of any non-minimal path.

     Only one FIFO or VC is required by MIN, bRINR, sRINR, and TERA, whereas Omni-WAR, UGAL, and Valiant require 2 VCs to ensure deadlock-freedom, which doubles the buffering space and increases the design complexity of the switch.
     MIN and Valiant are only used as baselines.

    A FM$_{64}$ with 4096 servers is extensively simulated.
    Nevertheless, to demonstrate the applicability of TERA in larger high-radix low-diameter networks, an initial evaluation of an $8 \times 8$ 2D-HyperX network is     included.

    Three modes of synthetic traffic generation are used:

    \begin{itemize}
        \item \textbf{Fixed generation}: Each server generates a total number of 1250 packets to be sent following a specific traffic pattern.
        The time to consume all packets is shown.
        It will serve to evaluate/discard the link ordering and the TERA routing mechanisms.
        \item \textbf{Bernoulli generation}: All servers generate traffic continuously during 80K cycles at a given rate of offered load, following a specific admissible traffic pattern.
        The considered performance metrics are explained below.
        \item \textbf{Application kernels}: A synthetic communication kernel is simulated in the whole FM, and the time to completion is measured.
        Processes of the kernel will be assigned to servers using both linear and random mapping.
    \end{itemize}

    For Bernoulli traffic, the metrics presented include average accepted throughput, average message latency, hop distribution, and Jain index for load generation.
    The hop distribution will be shown for the maximum offered load, while the other metrics will be displayed for different injected loads.
    The first three metrics are commonly used in the technical literature, while the last one provides a more specific view of the network fairness.
    The Jain index~\cite{jain1984quantitative} takes into account the load generated by all servers in the network.
    It is calculated as $\frac{ \left( \sum_{i=1}^n x_i \right)^2 } {n \sum_{i=1}^n x^2_i }$,
    where $x_i$ is the load generated by the server $i$ and $n$ is the total number of servers.
    A Jain index of $1.0$ indicates perfect equity, with all servers injecting the same load.
    Lower values suggest disparity between servers, the lower the worse.

    The admissible synthetic traffic patterns employed are:
    \begin{itemize}
        \item \textbf{Uniform} (\textbf{UN}): Each server generates messages destined to a random server of the network.
        This pattern could model different scenarios in a data center, where the distribution of traffic is uniform across the network due to the application (all-to-all communications in machine learning, for example), or due to a random mapping of tasks.
        \item \textbf{Random switch permutation} (\textbf{RSP}): It represents any permutation of switches.
        In the RSP, all the servers of a switch generate traffic to the servers of a destination switch.
        The map source to destination switch is a random permutation of the $n$ switches.
        \item \textbf{Fixed random} (\textbf{FR})\cite{BGQ_hood}: Each server selects a random server from the network to send a message, potentially creating endpoint bottlenecks.
        \item \textbf{Switch Cartesian transforms}: The servers from switch $x$ send traffic to the servers in switch $f(x)$, where $f$ is a function that transforms the index of the switch.
        We call \textbf{shift} to the function $f(x)=x+1$ and \textbf{complement} to $f(x)=-x-1$.
    \end{itemize}

    The following application kernels are considered:
    \begin{itemize}
        \item \textbf{All2All}: Classical send loop.
        In iteration $i$ task $t$ sends to $t+i$~\cite{Thakur}.
        \item \textbf{Stencil 2D}: All the processes are arranged in a 2D grid, and each process communicates with the 8 processes in its Moore neighborhood.
        \item \textbf{Stencil 3D}: All the processes are arranged in a 3D grid, and each process communicates with its 26 neighbours across faces, edges, and corners.
        \item \textbf{FFT 3D}: A FFT3D with pencil decompositions using a 2D grid of processes and each process manages a pencil of data.
        Communications arise from partial transpositions of data and are All2All across each row or column~\cite{orozco2012demystifying}.
        \item \textbf{All-reduce}: The Rabenseifner algorithm, composed of a scatter-reduce and an all-gather.
        The algorithm is optimal in terms of bandwidth and number of messages for power of two processes~\cite{rabenseifner2004optimization}.
    \end{itemize}


    \section{Performance Evaluation}
    \label{sec:results}

	Several empirical results are shown next.
	Subsection~\ref{subsec:link_ordering} is devoted to evaluating the link ordering mechanisms.
	Subsection~\ref{subsec:seleccion} study the differences among possible service topologies.
	Subsections~\ref{sec:static} and~\ref{subsec:apps} evaluate TERA with a HyperX service topology in different scenarios.
    Finally, Subsection~\ref{sec:discussion} evaluates the behaviour of TERA in a two-dimensional topology.

    \subsection{Link ordering evaluation}\label{subsec:link_ordering}

    \bgroup

\pgfplotsset{
	automatically generated axis/.style={
		height=105pt,
		width=174pt,
		scaled ticks=false,
		xticklabel style={font=\tiny,/pgf/number format/.cd, fixed,/tikz/.cd},
		yticklabel style={font=\tiny,/pgf/number format/.cd, fixed,/tikz/.cd},
		x label style={at={(ticklabel cs:0.5, -5pt)},name={x label},anchor=north,font=\scriptsize},
		y label style={at={(ticklabel cs:0.5, -5pt)},name={y label},anchor=south,font=\scriptsize},
	},
	automatically generated symbolic/.style={
		height=90pt,
		width=100pt,
		xticklabel style={font=\tiny}, 
		yticklabel style={font=\tiny,/pgf/number format/.cd, fixed,/tikz/.cd},
		x label style={at={(ticklabel cs:0.5, -5pt)},name={x label},anchor=north,font=\scriptsize},
		y label style={at={(ticklabel cs:0.5, -5pt)},name={y label},anchor=south,font=\scriptsize},
	},
	first kind/.style={
		legend style={font=\scriptsize,fill=none},
		legend columns=4,legend cell align=left,
	},
	posterior kind/.style={
		legend style={draw=none},
	},
}
\newcommand\captionprologue{X: }
\newcommand\experimenttitle{03-burst-facil/cycles.pdf (all 48 done)}
\newcommand\experimentheader{\tiny 03-burst-facil:cycles.pdf (all 48 done)\\pdflatex on \today\\version=heads/alex-stable-dirty-520894dfe13e6f413f0827db9d73d75acb82f89f(0.6.3)}
%

\def\xShiftxRIxtxt{}
\def\xShiftxRIx{0}
\def\xShiftxRIIxtxt{"Shift-2"}
\def\xShiftxRIIx{1}
\def\xShiftxRIIIxtxt{"Shift-3"}
\def\xShiftxRIIIx{2}
\def\xShiftxRIVxtxt{"Shift-4"}
\def\xShiftxRIVx{3}
\def\xSwitchxcomplementxxRIxtxt{"Switch complement +1"}
\def\xSwitchxcomplementxxRIx{4}
\def\xSwitchxcomplementxtxt{"Switch complement"}
\def\xSwitchxcomplementx{5}
\def\xSwitchxcomplementxxRIaxtxt{} 
\def\xSwitchxcomplementxxRIax{6}
\def\xRandomxswitchxpermutationxtxt{} 
\def\xRandomxswitchxpermutationx{7}

\def\xxxxValiantxxxxxtext{Valiant}


\begin{experimentfigure}%
	\centering%
	\tikzpicturedependsonfile{externalized-plots/external-RRIIIxburstxfacil-avgthroughput-selectorxxxxShiftxRIxxxxx-kind0.md5}%
	\tikzsetnextfilename{externalized-legends/legend-RRIIIxburstxfacil-avgthroughput-xxxxShiftxRIxxxxx}%
	\pgfplotslegendfromname{legend-RRIIIxburstxfacil-avgthroughput-xxxxShiftxRIxxxxx}\\
	\tikzsetnextfilename{externalized-plots/external-RRIIIxburstxfacil-avgthroughput-selectorxxxxShiftxRIxxxxx-kind0}%
	\begin{tikzpicture}[baseline,remember picture]
	\begin{axis}[
		automatically generated symbolic,xtick={0,...,0}, xticklabels = {{\xShiftxRIxtxt}},ybar,bar width=5pt,enlarge x limits=0.2,
		first kind,,
		legend to name=legend-RRIIIxburstxfacil-avgthroughput-xxxxShiftxRIxxxxx,
		title={Shift},
		ymin=0,
		ymajorgrids=true,
		yminorgrids=true,
		xmajorgrids=true,
		mark options=solid,
		minor y tick num=4,
		ylabel={Cycles to finish},
	]
\addplot[xxxxBrinrxrandomxxxxxbar] coordinates{(\xShiftxRIx,436672.5)};\addlegendentry{\xxxxBrinrxrandomxxxxxtext}
\addplot[xxxxUgalxxxAlexxlabelxxxxxbar] coordinates{(\xShiftxRIx,47937.168)};\addlegendentry{\xxxxUgalxxxAlexxlabelxxxxxtext}
\addplot[xxxxUgalxxxValiantxxxxxbar] coordinates{(\xShiftxRIx,61176.168)};\addlegendentry{\xxxxUgalxxxValiantxxxxxtext}
\addplot[xxxxValiantxxxxxbar] coordinates{(\xShiftxRIx,46634.5)};\addlegendentry{\xxxxValiantxxxxxtext}
	\end{axis}
	\end{tikzpicture}
	\tikzpicturedependsonfile{externalized-plots/external-RRIIIxburstxfacil-avgthroughput-selectorxxxxSwitchxcomplementxxRIaxxxxx-kind0.md5}%
	\tikzsetnextfilename{externalized-legends/legend-RRIIIxburstxfacil-avgthroughput-xxxxSwitchxcomplementxxRIaxxxxx}%
	\tikzsetnextfilename{externalized-plots/external-RRIIIxburstxfacil-avgthroughput-selectorxxxxSwitchxcomplementxxRIaxxxxx-kind0}%
	\begin{tikzpicture}[baseline,remember picture]
	\begin{axis}[
		automatically generated symbolic,xtick={6,...,6}, xticklabels = {{\xSwitchxcomplementxxRIaxtxt}},ybar,bar width=5pt,enlarge x limits=0.2,
		first kind,,
		legend to name=legend-RRIIIxburstxfacil-avgthroughput-xxxxSwitchxcomplementxxRIaxxxxx,
		title={Complement},
		ymin=0,
		ymajorgrids=true,
		yminorgrids=true,
		xmajorgrids=true,
		mark options=solid,
		minor y tick num=4,
		ylabel={Cycles to finish},
	]
	\addplot[xxxxBrinrxrandomxxxxxbar] coordinates{(\xRandomxswitchxpermutationx,111660.836)};\addlegendentry{\xxxxBrinrxrandomxxxxxtext}
	\addplot[xxxxUgalxxxAlexxlabelxxxxxbar] coordinates{(\xRandomxswitchxpermutationx,111609.164)};\addlegendentry{\xxxxUgalxxxAlexxlabelxxxxxtext}
	\addplot[xxxxUgalxxxValiantxxxxxbar] coordinates{(\xRandomxswitchxpermutationx,60551.832)};\addlegendentry{\xxxxUgalxxxValiantxxxxxtext}
	\addplot[xxxxValiantxxxxxbar] coordinates{(\xRandomxswitchxpermutationx,46923.832)};\addlegendentry{\xxxxValiantxxxxxtext}
	\end{axis}
	\end{tikzpicture}
	\tikzpicturedependsonfile{externalized-plots/external-RRIIIxburstxfacil-avgthroughput-selectorxxxxRandomxswitchxpermutationxxxxx-kind0.md5}%
	\tikzsetnextfilename{externalized-legends/legend-RRIIIxburstxfacil-avgthroughput-xxxxRandomxswitchxpermutationxxxxx}%
	\tikzsetnextfilename{externalized-plots/external-RRIIIxburstxfacil-avgthroughput-selectorxxxxRandomxswitchxpermutationxxxxx-kind0}%
	\begin{tikzpicture}[baseline,remember picture]
	\begin{axis}[
		automatically generated symbolic,xtick={7,...,7}, xticklabels = {{\xRandomxswitchxpermutationxtxt}},ybar,bar width=5pt,enlarge x limits=0.2,
		first kind,,
		legend to name=legend-RRIIIxburstxfacil-avgthroughput-xxxxRandomxswitchxpermutationxxxxx,
		title={RSP},
		ymin=0, 
		ymajorgrids=true,
		yminorgrids=true,
		xmajorgrids=true,
		mark options=solid,
		minor y tick num=4,
		ylabel={Cycles to finish},
	]
\addplot[xxxxBrinrxrandomxxxxxbar] coordinates{(\xRandomxswitchxpermutationx,340390.5)};\addlegendentry{\xxxxBrinrxrandomxxxxxtext}
\addplot[xxxxUgalxxxAlexxlabelxxxxxbar] coordinates{(\xRandomxswitchxpermutationx,88205.164)};\addlegendentry{\xxxxUgalxxxAlexxlabelxxxxxtext}
\addplot[xxxxUgalxxxValiantxxxxxbar] coordinates{(\xRandomxswitchxpermutationx,60551.832)};\addlegendentry{\xxxxUgalxxxValiantxxxxxtext}
\addplot[xxxxValiantxxxxxbar] coordinates{(\xRandomxswitchxpermutationx,47265.5)};\addlegendentry{\xxxxValiantxxxxxtext}
	\end{axis}
	\end{tikzpicture}
	\caption{Time to finish of different traffic patterns in a FM$_{64}$ with 4096 servers.}
	\label{fig:ttf_simple_apps}%
\end{experimentfigure}

\egroup

    In Figure~\ref{fig:ttf_simple_apps} three different synthetic traffic patterns are simulated: shift, complement, and random permutation under fixed generation.
    The results show the number of cycles to finish the same amount of load for the different routings.


    As it can be seen, sRINR always provides smaller or same completion times than bRINR, standing out 9 times faster in the shift pattern.
    In the case of the RSP pattern, sRINR is 3.8 times faster.
    Complement traffic pattern reveals as the more challenging situation, in which both sRINR and bRINR take more than 2.3 times to complete than Valiant.

    Since the traffic patterns considered are adversarial, Valiant naturally provides the best completion times, although doubling  the resources of ordering schemes.
    It can be shown that sRINR under shift traffic patterns achieves best-case throughput of 0.5~flits/cycle/server, which matches Valiant performance.
    However, its worst-case performance, under complement traffic, drops to 0.25~flits/cycle/server, halving Valiant throughput.
    Thus, neither sRINR nor bRINR can be considered competitive.
    For comparison purposes, only sRINR will be considered in later evaluations.


    \subsection{Service topology selection}\label{subsec:seleccion}

    Figure~\ref{fig:performance_safe_evaluation} shows the performance of TERA with different service topologies.
    It includes two plots where the size of the FM is increased.
    RSP and FR traffic patterns under fixed generation are employed.

    For RSP, the performance of the 2-Tree (Path or 1D-Mesh) topology is the highest, and the 2D-HyperX is the lowest.
    However, the performance of 2D-HyperX reduces the gap with the path topology as the size of the FM increases, and the 3D-HyperX follows a similar trend.

    In the FR scenario, the path and the 4-Tree have the worst performance due to their asymmetry.
    Other admissible scenarios could have been chosen to stress the network, but the FR pattern represents the most challenging case.
    The Up-Down service paths condition the network performance.
    Similar results have been obtained in $d$-dimensional meshes.

    \bgroup

\pgfplotsset{
	automatically generated axis/.style={
		height=90pt,
		width=130pt,
		scaled ticks=false,
		xticklabel style={font=\tiny,/pgf/number format/.cd, fixed,/tikz/.cd},
		yticklabel style={font=\tiny,/pgf/number format/.cd, fixed,/tikz/.cd},
		x label style={at={(ticklabel cs:0.5, -5pt)},name={x label},anchor=north,font=\scriptsize},
		y label style={at={(ticklabel cs:0.5, -5pt)},name={y label},anchor=south,font=\scriptsize},
	},
	automatically generated symbolic/.style={
		height=105pt,
		width=100pt,
		xticklabel style={font=\tiny,rotate=90},
		yticklabel style={font=\tiny,/pgf/number format/.cd, fixed,/tikz/.cd},
		x label style={at={(ticklabel cs:0.5, -5pt)},name={x label},anchor=north,font=\scriptsize},
		y label style={at={(ticklabel cs:0.5, -5pt)},name={y label},anchor=south,font=\scriptsize},
	},
	first kind/.style={
		legend style={font=\scriptsize,fill=none},
		legend columns=3,legend cell align=left,
	},
	posterior kind/.style={
		legend style={draw=none},
	},
}
\newcommand\captionprologue{X: }
\newcommand\experimenttitle{00-pruebas-k8/bar_throughput.pdf (all 18 done)}
\newcommand\experimentheader{\tiny 00-pruebas-k8:bar\_throughput.pdf (all 18 done)\\pdflatex on \today\\version=heads/alex-stable-dirty-ffe590dee2973b77f19bc89796b6a76f715e09ab(0.6.3)}
%

\def\xFixedxRandomxtxt{}
\def\xFixedxRandomx{0}
\def\xRandomxswitchxpermutationxtxt{"Random switch permutation"}
\def\xRandomxswitchxpermutationx{0}
\def\xShiftxRIxtxt{}
\def\xShiftxRIx{0}
\def\xkochox{0}
\def\xkdieciseisx{1}
\def\xktreintaydosx{2}
\def\xksesentaycuatrox{3}
\def\xValiantxtext{Valiant}


\begin{experimentfigure}%
	\centering%
	\tikzpicturedependsonfile{externalized-plots/external-RRxpruebasxkRVIII-barthroughput-selectorxxxx-kind0.md5}%
	\tikzsetnextfilename{externalized-legends/legend-RRxpruebasxkRVIII-barthroughput-xxxx}%
	\pgfplotslegendfromname{legend-RRxpruebasxkRVIII-barthroughput-xxxx}\\
	\tikzsetnextfilename{externalized-plots/random-switchperm}%
	\begin{tikzpicture}[baseline,remember picture]
	\begin{axis}[
		automatically generated axis, xtick={0, 1, 2, 3}, xticklabels = {8, 16, 32, 64},
		first kind,,
		legend to name=legend-RRxpruebasxkRVIII-barthroughput-xxxx,
		title={RSP},
		ymin=0,%
		ymajorgrids=true,
		yminorgrids=true,
		xmajorgrids=true,
		mark options=solid,
		minor y tick num=4,
		xlabel={FM size},
		ylabel={Cycle to finish},
		yticklabel={\pgfmathparse{\tick/4}\pgfmathprintnumber{\pgfmathresult}},
		scaled y ticks=true,
	]
\addplot[xMeshxopportunisticx] coordinates{  (\xkochox,167711.33) (\xkdieciseisx,172371.33) (\xktreintaydosx,179194) (\xksesentaycuatrox, 174855.33)};\addlegendentry{\xMeshxopportunisticxtext}
\addplot[xEmbeddedxHyperXxRIIDxsourcex] coordinates{ (\xkochox,229604.67) (\xkdieciseisx,220618) (\xktreintaydosx,210266) (\xksesentaycuatrox, 195323.33)};\addlegendentry{\xEmbeddedxHyperXxRIIDxsourcextext}
\addplot[hamingthreedimensions] coordinates{ (\xkochox, 205095) (\xkdieciseisx,202727.33) (\xktreintaydosx, 191239.33) (\xksesentaycuatrox, 184687.33)};\addlegendentry{\TERAHXIIItext}
\addplot[xRIVxtreexopportunisticx] coordinates{(\xkochox,203690) (\xkdieciseisx, 186832.67) (\xktreintaydosx,181195.33) (\xksesentaycuatrox, 183532.67)};\addlegendentry{\TERATREEtext}
\addplot[xValiantx] coordinates{ (\xkochox,176444.67) (\xkdieciseisx,178986) (\xktreintaydosx,184570) (\xksesentaycuatrox, 189062)};\addlegendentry{\xValiantxtext}
\addplot[xUgalValiantx] coordinates{ (\xkochox,185820.67) (\xkdieciseisx,199928.67) (\xktreintaydosx,217632.67) (\xksesentaycuatrox, 240554)};\addlegendentry{\UGALtext}
	\end{axis}
	\end{tikzpicture}
	\tikzsetnextfilename{externalized-plots/external-RRbxpruebasxkRVIII-barthroughput-selectorxxxx-kind0}%
	\begin{tikzpicture}[baseline,remember picture]
	\begin{axis}[
		automatically generated axis, xtick={0, 1, 2, 3}, xticklabels = {8, 16, 32, 64},
		first kind,,
		legend to name=legend-RRbxpruebasxkRVIII-barthroughput-xxxx,
		title={FR},
		ymin=0,%
		ymajorgrids=true,
		yminorgrids=true,
		xmajorgrids=true,
		mark options=solid,
		minor y tick num=4,
		xlabel={FM size},
		ylabel={Cycle to finish},
		yticklabel={\pgfmathparse{\tick/4}\pgfmathprintnumber{\pgfmathresult}},
		scaled y ticks=true,
	]
\addplot[xMeshxopportunisticx] coordinates{(\xkochox,384810) (\xkdieciseisx,3099631.3) (\xktreintaydosx,5688168.5) (\xksesentaycuatrox, 19212064)};\addlegendentry{\xMeshxopportunisticxtext}
\addplot[xEmbeddedxHyperXxRIIDxsourcex] coordinates{(\xkochox,329671.34) (\xkdieciseisx,448472.66) (\xktreintaydosx,442286) (\xksesentaycuatrox, 538230)};\addlegendentry{\xEmbeddedxHyperXxRIIDxsourcextext}
\addplot[hamingthreedimensions] coordinates{ (\xkochox, 345231) (\xkdieciseisx, 452434) (\xktreintaydosx, 438844.66) (\xksesentaycuatrox, 533719.3)};\addlegendentry{\TERAHXIIItext}
\addplot[xRIVxtreexopportunisticx] coordinates{(\xkochox,360492.66) (\xkdieciseisx, 913824.7) (\xktreintaydosx,3258278) (\xksesentaycuatrox,16188512)};\addlegendentry{\TERATREEtext}
\addplot[xValiantx] coordinates{(\xkochox,342408.66) (\xkdieciseisx,460132.66) (\xktreintaydosx,491116.66) (\xksesentaycuatrox, 593762)};\addlegendentry{\xValiantxtext}
\addplot[xUgalValiantx] coordinates{(\xkochox,340356.66) (\xkdieciseisx,444074) (\xktreintaydosx,441183.34) (\xksesentaycuatrox, 533359.3)};\addlegendentry{\xUgalxxxValiantxtext}
	\end{axis}
	\end{tikzpicture}
	\caption{Time to consume a burst of packets under Random Switch Permutation and Fixed Random patterns for different FM sizes.}%
	\label{fig:performance_safe_evaluation}%
\end{experimentfigure}

\egroup


%
%

    In conclusion, despite the trade-offs in non-minimal throughput due to the denser connectivity of HyperX, its  properties—symmetry, low diameter, and balanced link usage—make it the best choice for the service topology in TERA.
    2D-HyperX and 3D-HyperX are selected for the remainder of the experiments.
    Compared to link ordering schemes, TERA consistently offers better path diversity, resilience to congestion, and fairer utilization of network resources.

    \subsection{Bernoulli traffic evaluation}\label{sec:static}

    \input{common_adaptive_hammings_k64.tex}
    Figure~\ref{fig:static_traffic_patterns} shows the performance of the different routing mechanisms for the UN and RSP traffic patterns under Bernoulli generation.

    With UN, all routing algorithms use minimal paths for approximately 80–90\% of the packets, resulting in similar performance.
    Minimal routing serves as the baseline for evaluating algorithms that utilize only 1 VC.
    Omni-WAR and UGAL, which use 2 VCs, slightly outperform the others by reducing some head-of-line (HoL) blocking on non-minimal paths.
    All mechanisms exhibit fair behavior with a comparable hop distribution at maximum injection load.
    The TERA variants show a negligible use of longer paths (3- and 4-hop routes occur in less than 1\% of cases), which has no impact on performance and does not appear under low to medium traffic loads.

    For the RSP pattern, routing performance varies more significantly.
    The best-performing algorithms are, in order, Omni-WAR, TERA-HX3, Valiant, TERA-HX2, UGAL, and sRINR.
    Omni-WAR and TERA-HX3 slightly outperform Valiant because they can adaptively select the best non-minimal path at the source switch.
    TERA outperforms sRINR by an 80\%, due to the limitations of sRINR.
    Again, TERA routing shows extremely low occurrences of 3 and 4-hop paths (less than 0.1\%), which are inconsequential to performance.

    Lastly, the utilization of the service and main links for TERA-HX routing algorithm was measured.
    Under uniform traffic, all links had similar utilization since TERA selects minimal paths without distinguishing between main and service links.
    Under RSP traffic, service links show about half the utilization of main links.
    Nevertheless, since service links constitute less than 10\% of the total (192 out of 2016 in TERA-HX3), their limited usage has a negligible impact on overall link utilization and performance.

    \subsection{Application results}\label{subsec:apps}

    \bgroup

\pgfplotsset{
 	automatically generated symbolic/.style={
 		height=110pt,
 		width=120pt,
 		xticklabel style={font=\tiny}, 
 		yticklabel style={font=\tiny,/pgf/number format/.cd, fixed,/tikz/.cd},
 		x label style={at={(ticklabel cs:0.5, -5pt)},name={x label},anchor=north,font=\scriptsize},
 		y label style={at={(ticklabel cs:0.5, -5pt)},name={y label},anchor=south,font=\scriptsize},
 	},
}
\newcommand\captionprologue{X: }
\newcommand\experimenttitle{03a-fusion-todo/cycles_to_finish.pdf (all 180 done)}
\newcommand\experimentheader{\tiny 03a-fusion-todo:cycles\_to\_finish.pdf (all 180 done)\\pdflatex on \today\\version=heads/alex-stable-dirty-ffe590dee2973b77f19bc89796b6a76f715e09ab(0.6.3)}

\def\xNaturalxtxt{Linear}
\def\xNaturalx{0}
\def\xRandomxtxt{Random}
\def\xRandomx{1}
%


\begin{experimentfigurecomp}%
	\centering%

	\tikzpicturedependsonfile{externalized-plots/external-RRIIIaxfusionxtodo-aaaa-selectorxxxxAllRIIAllxRIRVIxxxxx-kind0.md5}%
	\tikzsetnextfilename{externalized-legends/legend-RRIIIaxfusionxtodo-aaaa-xxxxAllRIIAllxRIRVIxxxxx}%
	\pgfplotslegendfromname{legend-RRIIIaxfusionxtodo-aaaa-xxxxAllRIIAllxRIRVIxxxxx}\\
	\tikzsetnextfilename{externalized-plots/external-RRIIIaxfusionxtodo-aaaa-selectorxxxxAllRIIAllxRIRVIxxxxx-kind0}%
	\begin{tikzpicture}[baseline,remember picture]
	\begin{axis}[
		automatically generated symbolic,xtick={0,...,1}, xticklabels = {{\xNaturalxtxt},{\xRandomxtxt}},ybar,bar width=3.5pt,enlarge x limits=0.5,
		first kind,,
		legend columns=7,
		legend to name=legend-RRIIIaxfusionxtodo-aaaa-xxxxAllRIIAllxRIRVIxxxxx,
		title={All2All},
		ymin=0,%
		ymajorgrids=true,
		yminorgrids=true,
		xmajorgrids=true,
		mark options=solid,
		minor y tick num=4,
		xlabel={},
		ylabel={Cycles to finish},
	]
\addplot[xxxxomnixxxxxbar] coordinates{(\xNaturalx,254339) (\xRandomx,161611.67)};\addlegendentry{\xxxxomnixxxxxtext}
\addplot[xxxxUgalxxxValiantxxxxxbar] coordinates{(\xNaturalx,289055.66) (\xRandomx,155365.67)};\addlegendentry{\xxxxUgalxxxValiantxxxxxtext}
\addplot[xxxxUgalxxxAlexxlabelxxxxxbar] coordinates{(\xNaturalx,298104.34) (\xRandomx,169717.67)};\addlegendentry{\xxxxUgalxxxAlexxlabelxxxxxtext}
\addplot[xxxxEmbeddedxHyperXxRIIDxxxxxbar] coordinates{(\xNaturalx,273797.66) (\xRandomx,164425)};\addlegendentry{\xxxxEmbeddedxHyperXxRIIDxxxxxtext}
\addplot[xxxxEmbeddedxHyperXxRIIIDxxxxxbar] coordinates{(\xNaturalx,267509) (\xRandomx,164549.67)};\addlegendentry{\xxxxEmbeddedxHyperXxRIIIDxxxxxtext}
	\end{axis}
	\end{tikzpicture}
	\tikzpicturedependsonfile{externalized-plots/external-RRIIIaxfusionxtodo-aaaa-selectorxxxxStencilxkingxRIIDxxxxx-kind0.md5}%
	\tikzsetnextfilename{externalized-legends/legend-RRIIIaxfusionxtodo-aaaa-xxxxStencilxkingxRIIDxxxxx}%
	\tikzsetnextfilename{externalized-plots/external-RRIIIaxfusionxtodo-aaaa-selectorxxxxStencilxkingxRIIDxxxxx-kind0}%
	\begin{tikzpicture}[baseline,remember picture]
	\begin{axis}[
		automatically generated symbolic,xtick={0,...,1}, xticklabels = {{\xNaturalxtxt},{\xRandomxtxt}},ybar,bar width=3.5pt,enlarge x limits=0.5,
		first kind,,
		legend to name=legend-RRIIIaxfusionxtodo-aaaa-xxxxStencilxkingxRIIDxxxxx,
		title={Stencil 2D},
		ymin=0,%
		ymajorgrids=true,
		yminorgrids=true,
		xmajorgrids=true,
		mark options=solid,
		minor y tick num=4,
		xlabel={},
		ylabel={Cycles to finish},
	]
\addplot[xxxxomnixxxxxbar] coordinates{(\xNaturalx,35212.332) (\xRandomx,32253)};\addlegendentry{\xxxxomnixxxxxtext}
\addplot[xxxxUgalxxxValiantxxxxxbar] coordinates{(\xNaturalx,49503) (\xRandomx,40590.332)};\addlegendentry{\xxxxUgalxxxValiantxxxxxtext}
\addplot[xxxxUgalxxxAlexxlabelxxxxxbar] coordinates{(\xNaturalx,54544.332) (\xRandomx,48254.332)};\addlegendentry{\xxxxUgalxxxAlexxlabelxxxxxtext}
\addplot[xxxxEmbeddedxHyperXxRIIDxxxxxbar] coordinates{(\xNaturalx,41701) (\xRandomx,36736.332)};\addlegendentry{\xxxxEmbeddedxHyperXxRIIDxxxxxtext}
\addplot[xxxxEmbeddedxHyperXxRIIIDxxxxxbar] coordinates{(\xNaturalx,40101) (\xRandomx,36935.668)};\addlegendentry{\xxxxEmbeddedxHyperXxRIIIDxxxxxtext}
	\end{axis}
	\end{tikzpicture}
	\tikzpicturedependsonfile{externalized-plots/external-RRIIIaxfusionxtodo-aaaa-selectorxxxxStencilxkingxRIIIDxxxxx-kind0.md5}%
	\tikzsetnextfilename{externalized-legends/legend-RRIIIaxfusionxtodo-aaaa-xxxxStencilxkingxRIIIDxxxxx}%
	\tikzsetnextfilename{externalized-plots/external-RRIIIaxfusionxtodo-aaaa-selectorxxxxStencilxkingxRIIIDxxxxx-kind0}%
	\begin{tikzpicture}[baseline,remember picture]
	\begin{axis}[
		automatically generated symbolic,xtick={0,...,1}, xticklabels = {{\xNaturalxtxt},{\xRandomxtxt}},ybar,bar width=3.5pt,enlarge x limits=0.5,
		first kind,,
		legend to name=legend-RRIIIaxfusionxtodo-aaaa-xxxxStencilxkingxRIIIDxxxxx,
		title={Stencil 3D},
		ymin=0,%
		ymajorgrids=true,
		yminorgrids=true,
		xmajorgrids=true,
		mark options=solid,
		minor y tick num=4,
		xlabel={},
		ylabel={Cycles to finish},
	]
\addplot[xxxxomnixxxxxbar] coordinates{(\xNaturalx,149439) (\xRandomx,124001.664)};\addlegendentry{\xxxxomnixxxxxtext}
\addplot[xxxxUgalxxxValiantxxxxxbar] coordinates{(\xNaturalx,172725.67) (\xRandomx,143417)};\addlegendentry{\xxxxUgalxxxValiantxxxxxtext}
\addplot[xxxxUgalxxxAlexxlabelxxxxxbar] coordinates{(\xNaturalx,217508.33) (\xRandomx,166607.67)};\addlegendentry{\xxxxUgalxxxAlexxlabelxxxxxtext}
\addplot[xxxxEmbeddedxHyperXxRIIDxxxxxbar] coordinates{(\xNaturalx,169967) (\xRandomx,142616.33)};\addlegendentry{\xxxxEmbeddedxHyperXxRIIDxxxxxtext}
\addplot[xxxxEmbeddedxHyperXxRIIIDxxxxxbar] coordinates{(\xNaturalx,161435) (\xRandomx,140655.67)};\addlegendentry{\xxxxEmbeddedxHyperXxRIIIDxxxxxtext}
	\end{axis}
	\end{tikzpicture}
	\tikzpicturedependsonfile{externalized-plots/external-RRIIIaxfusionxtodo-aaaa-selectorxxxxFFTRIIIDxxxxx-kind0.md5}%
	\tikzsetnextfilename{externalized-legends/legend-RRIIIaxfusionxtodo-aaaa-xxxxFFTRIIIDxxxxx}%
	\tikzsetnextfilename{externalized-plots/external-RRIIIaxfusionxtodo-aaaa-selectorxxxxFFTRIIIDxxxxx-kind0}%
	\begin{tikzpicture}[baseline,remember picture]
	\begin{axis}[
		automatically generated symbolic,xtick={0,...,1}, xticklabels = {{\xNaturalxtxt},{\xRandomxtxt}},ybar,bar width=3.5pt,enlarge x limits=0.5,
		first kind,,
		legend to name=legend-RRIIIaxfusionxtodo-aaaa-xxxxFFTRIIIDxxxxx,
		title={FFT3D},
		ymin=0,%
		ymajorgrids=true,
		yminorgrids=true,
		xmajorgrids=true,
		mark options=solid,
		minor y tick num=4,
		xlabel={},
		ylabel={Cycles to finish},
	]
\addplot[xxxxomnixxxxxbar] coordinates{(\xNaturalx,71095) (\xRandomx,83739.664)};\addlegendentry{\xxxxomnixxxxxtext}
\addplot[xxxxUgalxxxValiantxxxxxbar] coordinates{(\xNaturalx,70223) (\xRandomx,80938.336)};\addlegendentry{\xxxxUgalxxxValiantxxxxxtext}
\addplot[xxxxUgalxxxAlexxlabelxxxxxbar] coordinates{(\xNaturalx,75670.336) (\xRandomx,87362.336)};\addlegendentry{\xxxxUgalxxxAlexxlabelxxxxxtext}
\addplot[xxxxEmbeddedxHyperXxRIIDxxxxxbar] coordinates{(\xNaturalx,75043.664) (\xRandomx,83665.664)};\addlegendentry{\xxxxEmbeddedxHyperXxRIIDxxxxxtext}
\addplot[xxxxEmbeddedxHyperXxRIIIDxxxxxbar] coordinates{(\xNaturalx,74879) (\xRandomx,83223.664)};\addlegendentry{\xxxxEmbeddedxHyperXxRIIIDxxxxxtext}
	\end{axis}
	\end{tikzpicture}
	\tikzpicturedependsonfile{externalized-plots/external-RRIIIaxfusionxtodo-aaaa-selectorxxxxAllreducexxxxx-kind0.md5}%
	\tikzsetnextfilename{externalized-legends/legend-RRIIIaxfusionxtodo-aaaa-xxxxAllreducexxxxx}%
	\tikzsetnextfilename{externalized-plots/external-RRIIIaxfusionxtodo-aaaa-selectorxxxxAllreducexxxxx-kind0}%
	\begin{tikzpicture}[baseline,remember picture]
	\begin{axis}[
		automatically generated symbolic,xtick={0,...,1}, xticklabels = {{\xNaturalxtxt},{\xRandomxtxt}},ybar,bar width=3.5pt,enlarge x limits=0.5,
		first kind,,
		legend to name=legend-RRIIIaxfusionxtodo-aaaa-xxxxAllreducexxxxx,
		title={Allreduce},
		ymin=0,%
		ymajorgrids=true,
		yminorgrids=true,
		xmajorgrids=true,
		mark options=solid,
		minor y tick num=4,
		xlabel={},
		ylabel={Cycles to finish},
	]
\addplot[xxxxomnixxxxxbar] coordinates{(\xNaturalx,407265.66) (\xRandomx,543099)};\addlegendentry{\xxxxomnixxxxxtext}
\addplot[xxxxUgalxxxValiantxxxxxbar] coordinates{(\xNaturalx,627199) (\xRandomx,752035)};\addlegendentry{\xxxxUgalxxxValiantxxxxxtext}
\addplot[xxxxUgalxxxAlexxlabelxxxxxbar] coordinates{(\xNaturalx,813860.3) (\xRandomx,797597.7)};\addlegendentry{\xxxxUgalxxxAlexxlabelxxxxxtext}
\addplot[xxxxEmbeddedxHyperXxRIIDxxxxxbar] coordinates{(\xNaturalx,438237.66) (\xRandomx,585297.7)};\addlegendentry{\xxxxEmbeddedxHyperXxRIIDxxxxxtext}
\addplot[xxxxEmbeddedxHyperXxRIIIDxxxxxbar] coordinates{(\xNaturalx,423987.66) (\xRandomx,582464.3)};\addlegendentry{\xxxxEmbeddedxHyperXxRIIIDxxxxxtext}
	\end{axis}
	\end{tikzpicture}
	\caption{Cycles to consume an application kernel in a FM$_{64}$}%
	\label{fig:apps_k64_all}%
\end{experimentfigurecomp}

\egroup

    \input{violin_summary_output_2.tex}

Figure~\ref{fig:apps_k64_all} shows the application completion times across different routing algorithms.
Despite not using virtual channels, TERA achieves competitive performance, closely following Omni-WAR and outperforming UGAL in most scenarios.
This proves the effectiveness of TERA as a low-cost routing mechanism for Full-mesh networks.

Omni-WAR delivers the best overall performance due to the use of 2 VCs and unrestricted non-minimal bandwidth.
It consistently outperforms all other mechanisms, particularly in the Stencil 2D and 3D workloads, where its advantage increases to around 10\%.

In spite of not using VCs, TERA (HX2 and HX3) trails Omni-WAR by a small margin, within 7\% on average, and outperforms UGAL significantly, with speedups of up to 47\% in the Allreduce application.
This improvement stems from TERA’s ability to select an optimal non-minimal path from a large pool of candidates at the source switch, effectively reducing packet latency and improving load balancing.
This capability is particularly beneficial in heavy communication workloads like Allreduce, where message dependencies can cause added delays.

In TERA, packets taking 3 or 4 hops constitute less than 1\% of the total on average.
To assess the impact of these longer paths on packet latency across all applications, violin plots are presented in Figure~\ref{fig:percentileslatency}.
As no significant performance differences were observed between linear and random process mappings, evaluations focus on the linear mapping.
The link ordering algorithm is omitted due to its lack of competitiveness.

TERA-HX2 and TERA-HX3 exhibit the lowest mean and 99\% percentile latencies in most cases.
This can be attributed to their reduced buffer space, which would result in lower queuing delays.
At the higher percentiles (99.9\% and 99.99\%), TERA remains the top performer except in the Stencil 3D workload, where its latency is comparable to Omni-WAR.
UGAL consistently shows the highest latency across all cases, primarily due to its reliance on a single randomly selected intermediate in Valiant-style routing, limiting its ability to adapt to network congestion.

TERA offers competitive performance, making it a practical and efficient solution for Full-mesh topologies.
Hence, TERA reveals as a low-cost alternative to existing adaptive algorithms like UGAL and Omni-WAR.

\subsection{Other topologies}
\label{sec:discussion}

%
%
%

Any topology based on FM networks, such as HyperX, Dragonfly and Dragonfly+, can take advantage of TERA.
A simulation of an $8\times8$ 2D-HyperX network with 512 servers is shown in Figure~\ref{fig:label_hamming_2d_perf}.

Dim-WAR, a routing algorithm specially conceived for the 2D-HyperX that uses 2 VCs~\cite{Kim_omni}, is also included in the comparison.
Additionally, we have developed a O1TURN~\cite{O1turn} version of TERA that uses 2 VCs.
Thus, Omni-WAR uses 4 VCs, Dim-WAR and O1TURN-TERA-HX3 use 2 VCs, and DOR-TERA-HX3 just 1VC.

In DOR-TERA-HX3, the TERA-HX3 routing algorithm is applied independently within each of the two FM$_8$ subnetworks traversed by a packet—one for each dimension—with dimensions visited in $XY$ order.
In the case of O1TURN-TERA-HX3, at the source switch, it is decided if the packet will progress either in XY or YX order, and a VC is needed for each dimension order.

It is noticeable that DOR-TERA-HX3 reveals competitive in most cases being the one with minimal resources.
Additionally, the performance of O1TURN-TERA-HX3 is near Omni-WAR with half the resources (from 4 VCs to 2 VCs), and up to 32\% better than Dim-WAR with the same buffering.

\bgroup

\pgfplotsset{compat=newest}
\makeatletter
\tikzset{nomorepostaction/.code=\let\tikz@postactions\pgfutil@empty}
\makeatother
\pgfplotsset{minor grid style={dashed,very thin, color=blue!15}}
\pgfplotsset{major grid style={very thin, color=black!30}}
\pgfplotsset{
	automatically generated axis/.style={
		height=105pt,
		width=174pt,
		scaled ticks=false,
		xticklabel style={font=\tiny,/pgf/number format/.cd, fixed,/tikz/.cd},
		yticklabel style={font=\tiny,/pgf/number format/.cd, fixed,/tikz/.cd},
		x label style={at={(ticklabel cs:0.5, -5pt)},name={x label},anchor=north,font=\scriptsize},
		y label style={at={(ticklabel cs:0.5, -5pt)},name={y label},anchor=south,font=\scriptsize},
	},
	automatically generated symbolic/.style={
		height=90pt,
		width=120pt,
		xticklabel style={font=\tiny}, 
		yticklabel style={font=\tiny,/pgf/number format/.cd, fixed,/tikz/.cd},
		x label style={at={(ticklabel cs:0.5, -5pt)},name={x label},anchor=north,font=\scriptsize},
		y label style={at={(ticklabel cs:0.5, -5pt)},name={y label},anchor=south,font=\scriptsize},
	},
	first kind/.style={
		legend style={font=\scriptsize,fill=none},
		legend columns=2,legend cell align=left,
	},
	posterior kind/.style={
		legend style={draw=none},
	},
}
\def\timetickcode{%
	\pgfkeys{/pgf/fpu,/pgf/fpu/output format=fixed}%
	\pgfmathparse{\tick}%
	\edef\tmp{\pgfmathresult}%
	\pgfmathtruncatemacro\seconds{\tmp-60*floor(\tmp/60)}%
	\pgfmathtruncatemacro\tmp{(\tmp - \seconds)/60}%
	\pgfmathtruncatemacro\minutes{\tmp-60*floor(\tmp/60)}%
	\pgfmathtruncatemacro\tmp{(\tmp - \minutes)/60}%
	\pgfmathtruncatemacro\hours{\tmp-24*floor(\tmp/24)}%
	\pgfmathtruncatemacro\days{(\tmp - \hours)/24}%
	\ifnum\days=0%
		{\tiny\hours:\minutes:\seconds}%
	\else%
		{\tiny\days-\hours:\minutes}%
	\fi%
}
\def\memorytickcode{%
	\pgfkeys{/pgf/fpu,/pgf/fpu/output format=fixed}%
	\pgfmathtruncatemacro\unitcase{log2(\tick+0.001)/10+1.1}
	\pgfmathparse{\tick / pow(1024,\unitcase-1)}%
	\pgfmathprintnumber{\pgfmathresult}%
	\ifcase\unitcase B
		\or KB
		\or MB
		\or GB
		\or TB
		\else +1024,\pgfmathprintnumber{unitcase}B%
	\fi%
}
\tikzset{
	automatically generated plot/.style={
		/pgfplots/error bars/x dir=both,
		/pgfplots/error bars/y dir=both,
		/pgfplots/error bars/x explicit,
		/pgfplots/error bars/y explicit,
		/pgfplots/error bars/error bar style={ultra thin,solid},
		/tikz/mark options={solid},
	},
	automatically generated bar plot/.style={
		/pgfplots/error bars/y dir=both,
		/pgfplots/error bars/y explicit,
	},
	automatically generated boxplot/.style={
	},
	x time ticks/.style={
		/pgfplots/scaled x ticks=false,
		/pgfplots/xticklabel={\timetickcode},
	},
	y time ticks/.style={
		/pgfplots/scaled y ticks=false,
		/pgfplots/yticklabel={\timetickcode}
	},
	x memory ticks from kilobytes/.style={
		/pgfplots/scaled x ticks=false,
		/pgfplots/xticklabel={\memorytickcode}
	},
	y memory ticks from kilobytes/.style={
		/pgfplots/scaled y ticks=false,
		/pgfplots/yticklabel={\memorytickcode}
	},
}

\newcommand\captionprologue{X: }
\newcommand\experimenttitle{09-pruebas-apps-8x8/cycles_to_finish.pdf (all 120 done)}
\newcommand\experimentheader{\tiny 09-pruebas-apps-8x8:cycles\_to\_finish.pdf (all 120 done)\\pdflatex on \today\\version=heads/alex-stable-dirty-d6bc24e0eefa4de5234e465f429ade2c2fa0184b(0.6.3)}

\def\xNaturalx{0}
\def\xRandomx{1}


\begin{experimentfigure}%
	\centering%
	\tikzpicturedependsonfile{externalized-plots/external-RRIXxpruebasxappsxRVIIIxRVIII-aaaa-selectorxxxxAllRIIAllxRIRVIxxxxx-kind0.md5}%
	\tikzsetnextfilename{externalized-legends/legend-RRIXxpruebasxappsxRVIIIxRVIII-aaaa-xxxxAllRIIAllxRIRVIxxxxx}%
	\pgfplotslegendfromname{legend-RRIXxpruebasxappsxRVIIIxRVIII-aaaa-xxxxAllRIIAllxRIRVIxxxxx}\\
	\tikzsetnextfilename{externalized-plots/external-RRIXxpruebasxappsxRVIIIxRVIII-aaaa-selectorxxxxAllRIIAllxRIRVIxxxxx-kind0}%
	\begin{tikzpicture}[baseline,remember picture]
	\begin{axis}[
		automatically generated symbolic,xtick={0,...,1}, xticklabels = {{\xNaturalxtxt},{\xRandomxtxt}},ybar,bar width=3.5pt,enlarge x limits=0.5,
		first kind,,
		legend to name=legend-RRIXxpruebasxappsxRVIIIxRVIII-aaaa-xxxxAllRIIAllxRIRVIxxxxx,
		title={All2All},
		ymin=0,%
		ymajorgrids=true,
		yminorgrids=true,
		xmajorgrids=true,
		mark options=solid,
		minor y tick num=4,
		xlabel={},
		ylabel={Cycles to finish},
	]
\addplot[xxxxOmnixWARxxxxxbar] coordinates{(\xNaturalx,270989) (\xRandomx,181611)};\addlegendentry{\xxxxOmnixWARxxxxxtext}
\addplot[xxxxDimWARxxxxxbar] coordinates{(\xNaturalx,345701.66) (\xRandomx,223302.33)};\addlegendentry{\xxxxDimWARxxxxxtext}
\addplot[xxxxORIxturnxSxHypercubexxxxxbar] coordinates{(\xNaturalx,279864.34) (\xRandomx,203101)};\addlegendentry{\xxxxORIxturnxSxHypercubexxxxxtext}
\addplot[xxxxDORxSxHypercubexxxxxbar] coordinates{(\xNaturalx,360319.66) (\xRandomx,267841.66)};\addlegendentry{\xxxxDORxSxHypercubexxxxxtext}
	\end{axis}
	\end{tikzpicture}
		\tikzpicturedependsonfile{externalized-plots/external-RRIXxpruebasxappsxRVIIIxRVIII-aaaa-selectorxxxxAllreducexxxxx-kind0.md5}%
	\tikzsetnextfilename{externalized-legends/legend-RRIXxpruebasxappsxRVIIIxRVIII-aaaa-xxxxAllreducexxxxx}%
	\tikzsetnextfilename{externalized-plots/external-RRIXxpruebasxappsxRVIIIxRVIII-aaaa-selectorxxxxAllreducexxxxx-kind0}%
	\begin{tikzpicture}[baseline,remember picture]
	\begin{axis}[
		automatically generated symbolic,xtick={0,...,1}, xticklabels = {{\xNaturalxtxt},{\xRandomxtxt}},ybar,bar width=3.5pt,enlarge x limits=0.5,
		first kind,,
		legend to name=legend-RRIXxpruebasxappsxRVIIIxRVIII-aaaa-xxxxAllreducexxxxx,
		title={Allreduce},
		ymin=0,%
		ymajorgrids=true,
		yminorgrids=true,
		xmajorgrids=true,
		mark options=solid,
		minor y tick num=4,
		xlabel={},
		ylabel={Cycles to finish},
	]
\addplot[xxxxOmnixWARxxxxxbar] coordinates{(\xNaturalx,107578.336) (\xRandomx,132090.33)};\addlegendentry{\xxxxOmnixWARxxxxxtext}
\addplot[xxxxDimWARxxxxxbar] coordinates{(\xNaturalx,105391) (\xRandomx,192593.67)};\addlegendentry{\xxxxDimWARxxxxxtext}
\addplot[xxxxORIxturnxSxHypercubexxxxxbar] coordinates{(\xNaturalx,113281.664) (\xRandomx,145592.33)};\addlegendentry{\xxxxORIxturnxSxHypercubexxxxxtext}
\addplot[xxxxDORxSxHypercubexxxxxbar] coordinates{(\xNaturalx,111513.664) (\xRandomx,223140.33)};\addlegendentry{\xxxxDORxSxHypercubexxxxxtext}
	\end{axis}
	\end{tikzpicture}
	\caption{Cycles to finish All2All and Allreduce applications in a 2D-HyperX 8x8 network.}%
	\label{fig:label_hamming_2d_perf}%
\end{experimentfigure}

\egroup

\section{Conclusions}
\label{sec:conclusion}
    This paper introduces TERA, a Topology-Embedded Routing Algorithm for Full-mesh networks that enables deadlock-free non-minimal routing without relying on virtual channels.
    An initial evaluation of link ordering schemes exposed their performance limitations, highlighting the need for a more effective solution.

    TERA addresses these challenges by embedding a physical service topology within the Full-mesh structure.
    An exhaustive analysis of the service topologies revealed that the HyperX family, particularly the 2D-HyperX and 3D-HyperX, are the most suitable candidate for TERA.

    In a Full-mesh network, TERA outperforms link ordering routing algorithms by an 80\% under adverse traffic, and up to 100\% in application kernels.
    Furthermore, compared to other VC-based approaches, it reduces buffer requirements by 50\%, while maintaining comparable latency and throughput.
    Lastly, early results from a 2D-HyperX evaluation show that TERA outperforms state-of-the-art algorithms that use the same number of VCs, achieving performance improvements of up to 32\%.

    This work concludes that TERA can be effectively applied to FM networks, and that it is a promising solution for high-radix, low-diameter topologies.


\clearpage

\appendix

    \section{Link ordering schemes}\label{app:link_ordering}

    \begin{proof}[Proof of Theorem~\ref{claim:utilizationfairpaths}]
        Let us define the function $\Phi:V^3\rightarrow \{0,1\}$ that indicates which paths of length 2 are allowed. For distinct switches $s,m,d\in V$, set $\Phi(s,m,d)=1$ if the route $s,m,d$ is allowed and set $\Phi(s,m,d)=0$ otherwise. That is, $\Phi$ describes which switches $m$ can be used to route from source $s$ to destination $d$. Additionally, just to avoid cumbersome notation, set $\Phi(s,m,d)=0$ when any pair of $s,m,d$ coincide.
        The hypothesis of the same utilization can now be written as follows. There is a constant $S$ such that for any arc $(a, b)$, holds $\sum_{x \in V}  \Phi (a,b,x) + \Phi (x,a,b)=S$. 

		Let $(m_a,m_b)$ be the arc at first position by the ordering of the scheme.
		This arc is always allowed as the start of a 2-length path, but never as its ending. Thus, $\sum_{x \in V } \Phi (m_a, m_b,x) = n-2$, and $\sum_{x \in V} \Phi (x, m_a, m_b) = 0$.
        Therefore, we obtain the value of the constant $S=n-2$.

        This implies that
		\begin{equation*}
			n (n-1)  (n-2) =  \sum_{(a,b)\in A} \sum_{x \in V} \Phi (a,b,x) + \Phi (x,a,b).
		\end{equation*}
        Note that the two terms have the same sum, that is,
		\begin{equation*}
			\sum_{(a,b)\in A} \sum_{x \in V} \Phi (a,b,x) = \sum_{(a,b)\in A} \sum_{x \in V}\Phi (x,a,b).
		\end{equation*}
		Therefore,
			$n(n-1)(n-2)=2\sum_{(a,b)\in A} \sum_{x \in V} \Phi (a,b,x)$.

		Finally, the total number of allowed paths is $\sum_{s,m,d\in V}\Phi(s,m,d)=\sum_{a,b\in V} \sum_{x \in V} \Phi (a,b,x)=\frac{1}{2}n(n-1)(n-2)$.
    \end{proof}


    \begin{proof}[Proof of Claim~\ref{claim:intermediates_forward}]

        The number of intermediate switches between a switch source $a$, and a switch destination $b$ is going to be calculated.
        It should be taken into account that the maximum number of intermediates is $n-2$, as $a$ and $b$ do not count as intermediates.

        Let us define the function $G_{ab}(i) := D(i,b) - D(a,i)$.
        A value $G_{ab}(i) > 0$ means that the non-minimal path $a \mapsto i \mapsto b$ is allowed.
        Otherwise, the path is forbidden.

        The function $G$ satisfies
		\begin{equation}\label{eq:forward_pairing}
			-G_{ab}(\frac{a+b}{2}+x) = G_{ab}(\frac{a+b}{2}-x)\;0 \le x < n.
		\end{equation}
        This implies that every valid intermediate in the set of switches is paired with an invalid one.
        Thus, around half of the switches are valid intermediates, and the others are invalid intermediates. We need to discount those cases were both sides of Equation~\ref{eq:forward_pairing} are 0, as both represent invalid intermediates. That is, the number of solutions to $G_{ab}(y)=0$ in the integers modulo $n$.
		Taking $x=0$ in Equation~\ref{eq:forward_pairing} gives $G_{ab}(\frac{a+b}{2})=0$.
		Hence, $y=\frac{a+b}{2}$ is a solution when it is  integer.
		More precisely, let us count the solutions in $y$ for $2y\equiv a+b \mod n$.
		When $n$ is odd, 2 has inverse, and we have exactly one solution.
        When $n$ is even:
        \begin{itemize}
            \item if $a$ and $b$ have different parity, there are no solutions, and there are $\frac{n-2}{2}$ intermediates.
            \item if $a$ and $b$ have the same parity, there are two solutions, $y=\frac{a+b}{2}$ and $y=\frac{a+b+n}{2}$, and there are $\frac{n-4}{2}$ intermediates.
        \end{itemize}
    \end{proof}

    \section{TERA estimated performance}\label{app:TERA}

    An estimation of the throughput achieved by TERA under various service topologies subjected to Random Switch Permutation (RSP) traffic can be expressed as $\frac{1}{1 + p^{-1}}$, where $p$ denotes the degree of the main topology divided by $n-1$.
    This expression is illustrated for several service topologies in Figure~\ref{fig:safe_links_ratio}.
    The following paragraph outlines the derivation of this expression.

    As the RSP is an adverse traffic pattern, using MIN paths (a single direct link between a pair of switches) is insufficient for achieving good performance.
    Furthermore, the direct link can always be used as a MIN path, regardless of whether it is main or service.
    The 2-hop deroutes must employ a main link in the first hop and any in its second hop.
    Longer routes can be disregarded; they are extremely rare, as it can be empirically seen.
    In consequence, under a reasonable balance of routes, the main links will saturate before the service ones.
    Let $p$ be the ratio of main links to all links, or equivalently, the probability that a randomly chosen link belongs to the main topology.
    Thus, $d=(n-1)p$ is the average degree of the main topology.
    Let $\gamma$ be the average switch injection rate, measured in multiples of full-rate links.
    Assuming traffic without same-switch messages, we have $n-1$ links for all traffic, implying $0\leq \gamma \leq n-1$.
    To estimate $\gamma$, first decompose it into a rate $\gamma_1$ for packets employing 1 hop and a rate $\gamma_2$ for packets employing 2 hops.
    Since most packets do not use them, we can ignore longer routes for an estimation.
    Then, we have that $\gamma=\gamma_1+\gamma_2$.
    Since the traffic is a switch-permutation then $0\leq\gamma_1\leq 1$.
    Packets that traverse a single hop will have the probability $p$ of adding traffic into the main topology.
    For packets traversing two hops, the first one is forced and the second hop only with probability $p$.
    Thus, the $dn/2$ main links, times the two directions, must be able to hold a total of $n(p\gamma_1 + (1+p)\gamma_2)$. This calculation can be
    shown in the following formula:
    \begin{equation}
        n(p\gamma_1 + (1+p)\gamma_2)\leq p(n-1)n.
    \end{equation}
    Note that the equality would be achieved if the load were perfectly distributed. Isolating the major contributor $\gamma_2$ becomes
    \begin{equation}
        \gamma_2\leq \frac{p(n-1)-p\gamma_1}{1+p} = \frac{(n-1)-\gamma_1}{1+p^{-1}}.
    \end{equation}
    Thus, giving a total load per switch of
    \begin{equation}
        \gamma\leq \frac{(n-1)+\gamma_1 p^{-1}}{1+p^{-1}} = \frac{n-1}{1+p^{-1}} + \frac{\gamma_1}{1+p}\leq \frac{n-1}{1+p^{-1}} + 1.
    \end{equation}
    Equivalently, assuming $n$ servers per switch, the average load for each server is $\gamma/n\leq\frac{1}{1+p^{-1}} + O(1/n)$.

    \section*{Acknowledgements}
    This work has been supported by the Spanish Ministry of Science and Innovation under contracts PID2019-105660RB-C22, TED2021-131176B-I00, and PID2022-136454NB-C21.
    C. Camarero is supported by the Spanish Ministry of Science and Innovation, Ramón y Cajal contract RYC2021-033959-I.
    R. Beivide is supported by The Barcelona Supercomputing Center (BSC) under contract CONSER02023011NG.
    Simulations were performed in the Altamira supercomputer, a node of the Spanish Supercomputing Network (RES).

    \bibliography{main}
    \bibliographystyle{plain}

\end{document}